\newcommand{\N}{\ensuremath{N}}
\newcommand{\s}{{\ensuremath{\sf s}}}
\newcommand{\linka}{a}
\newcommand{\x}{\mathbf{x}}
\newcommand{\y}{\mathbf{y}}
\DeclareMathOperator{\query}{Query}
\newcommand{\hide}[1]{}
\newcommand{\player}{p}
\newcommand{\numplayers}{n}
\newcommand{\numstrats}{k}
\newcommand{\numstrat}{\ell}
\newcommand{\degree}{d}
\newcommand{\G}{{\ensuremath {\mathcal G}}}
\newcommand {\xx}{\ensuremath{\delta}}
\newcommand {\bigO}{\mathcal{O}}
\newcommand{\forb}{\textbf{for}}
\newcommand{\qmin}{\ensuremath{q_{min}}}
\newcommand{\qmax}{\ensuremath{q_{max}}}
\begin{document}

\title{Learning Equilibria of Games via Payoff Queries}

\author{\name John Fearnley \email John.Fearnley@liverpool.ac.uk \\
       \addr Ashton Building, Ashton Street, University of Liverpool, United Kingdom
		\AND
		\name Martin Gairing \email Gairing@liverpool.ac.uk \\
       \addr Ashton Building, Ashton Street, University of Liverpool, United Kingdom
       \AND
       \name Paul W. Goldberg \email Paul.Goldberg@cs.ox.ac.uk \\
       \addr Wolfson Building, Parks Road, University of Oxford, United Kingdom
		\AND
		\name Rahul Savani \email Rahul.Savani@liverpool.ac.uk \\
       \addr Ashton Building, Ashton Street, University of Liverpool, United
Kingdom}

\editor{}

\maketitle

\begin{abstract}
A recent body of experimental literature has studied {\em empirical
game-theoretical analysis}, in which we have partial knowledge of a game,
consisting of observations of a subset of the pure-strategy profiles and their
associated payoffs to players. The aim is to find an exact or approximate Nash
equilibrium of the game, based on these observations. It is usually assumed
that the strategy profiles may be chosen in an on-line manner by the algorithm.
We study a corresponding computational learning model, and the query complexity
of learning equilibria for various classes of games. We give basic results for
exact equilibria of bimatrix and graphical games. We then study the query
complexity of approximate equilibria in bimatrix games. Finally, we study the
query complexity of exact equilibria in symmetric network congestion games. For
directed acyclic networks, we can learn the cost functions (and hence compute
an equilibrium) while querying just a small fraction of pure-strategy profiles.
For the special case of parallel links, we have the stronger result that an
equilibrium can be identified while only learning a small fraction of the cost
values. 
\end{abstract}


\begin{keywords}
Query complexity, bimatrix game, congestion game, equilibrium computation, approximate Nash equilibrium.
\end{keywords}

\section{Introduction}

Suppose that we have a game $G$ with a known set of players, and
known strategy sets for each player. We want to design an algorithm to solve $G$,
where the algorithm can only obtain information about $G$ via {\em payoff
queries}. In a payoff query, the algorithm proposes pure strategies for the
players, and is told the resulting payoffs. The general research issue is to
identify bounds on the number of payoff queries needed to find an
equilibrium, subject
to the assumption that $G$ belongs to some given class of games.

\subsection{Motivation}

Given a game, especially one with many players, it is unreasonable to
assume that anyone maintains an explicit representation of its payoff
function, even if the game in question has a concise representation.
However, in practice, a reasonable modelling assumption is that given,
say, a strategy profile for the players, we can determine their payoffs,
or some estimate of the payoffs. We are interested in algorithms that
find Nash equilibria using a sequence of queries, where a query proposes
a strategy profile and gets told the payoffs.
We would like to know under what conditions an algorithm
can find a solution based on knowledge of some but not all of the game's
payoffs, which is particularly important when there
are many players, and the number of pure-strategy profiles is large.
This kind of challenge (where you get observations of
profile/payoff-vector pairs, and you want to find an approximate
equilibrium, as opposed to the unobserved payoffs) has been the subject
of experimental work~\citep{VWS07,Wellman06,JVW08,DVSW09}, where
\cite{JVW08} focuses on the case (highly relevant to this work) where
the algorithm selects a sequence of pure profiles and gets told the resulting payoffs.
{\em In this paper, we introduce the study of payoff-query algorithms from
the algorithmic complexity viewpoint.} We are interested in upper and
lower bounds on the query complexity of classes of games.

From the theoretical perspective, we are studying a constrained class of
algorithms for computing equilibria of games.
The study of such constraints ---especially when they lead to lower bounds
or impossibility results--- informs us about
the approaches that a successful algorithm needs to apply.
In the context of equilibrium computation, other kinds of constraint include
{\em uncoupled} algorithms for computing equilibria~\citep{HMC03,HMC06},
communication-constrained algorithms~\citep{HM10,DFPPV,GP12},
and {\em oblivious} algorithms~\citep{DP09}.
Of course, the restriction to polynomial-time algorithms is the best-known
example of such a constraint.
Based on the algorithms and open problems identified in this paper, we
find this to be a compelling motivation for the further study of
the payoff-query model.
There are various related kinds of query models that are suggested by the payoff
queries studied here, which may also be of similar theoretical interest;
we discuss these in Section~\ref{sec:fw}.

\subsection{Games and query models}
\label{sec:querymodel}

In this paper we introduce the study of payoff-queries for  strategic-form games.
We also consider two models of concisely represented games:  
\emph{graphical games} \citep{KLS01}, where players are nodes in a given graph and the 
payoff of a player only depends on the strategies of its neighbors in the graph, 
and \emph{symmetric network congestion games} \citep{FPT04}, where the strategy space of the players
corresponds to the set of  paths that connect two nodes in a network.

For a strategic-form game, we assume that initially the querying algorithm
only knows~$\numplayers$, the number of players, and $\numstrats$, the
number of pure strategies that each player has. 

\begin{definition}
A {\em payoff query} to a strategic-form game $G$ selects a pure-strategy
profile $\s$ for $G$, and is given as response, the payoffs that $G$'s players
derive from $\s$.
\end{definition}

There are $\numstrats^\numplayers$ pure-strategy profiles in a game, and
one could learn the game exhaustively using this many payoff queries. 
We are interested in algorithms that require only a
small fraction of this trivial upper bound on the number of queries required.

For our results on symmetric network congestion games, we assume that initially
the algorithm only knows the number of players $n$, and the set of pure
strategies, given by a graph and the common origin/destination pair. In this
paper, we will consider two different query models, which are described in
the following definition.

\begin{definition}
\label{def:congest_query}
For a symmetric congestion game with $m$ pure strategies and $n$ players, a
\emph{query} is a tuple $q = (q_1, q_2, \dots, q_m)$, where for each pure
strategy $i=1,2,\ldots,m$, we have that $q_i \in \{0, 1, 2,\ldots, n\}$ is the number of players
assigned to~$i$ under the query.
In response to the query $q$, the querier learns the costs of each pure
strategy under the assigned loads. Let $Q = \sum_{1 \le i \le m} q_i$. We
consider two different types of queries:
\begin{itemize}
\item In a \emph{normal-query}, we require that $Q = n$;
\item in an \emph{under-query}, we require that $Q < n$.
\end{itemize}
\end{definition}

Normal-queries correspond to the query model that we use for strategic-form
games. 
For a congestion game, $m$, which is the number of paths from the origin to 
the destination in a graph, may be exponential.
While  we defined a query for congestion as a tuple of length $m$, both normal-queries
and under-queries require at most $n$ positions of this tuple to be non-zero,
so the query can be specified succinctly.
We use under-queries in our query algorithm for games played on directed
acyclic graphs. We feel that under-queries are a reasonable query model for
congestion games, because we can ask some players to refrain from playing when
we conduct our query. 




\begin{definition}
The {\em payoff query complexity} of a class of games ${\cal G}$, with respect to some
solution concept such as exact or approximate Nash equilibrium, is defined as follows.
It is the smallest $N$ such that there is some algorithm ${\cal A}$ that,
given $N$ payoff queries to any game $G\in{\cal G}$ (where
initially none of the payoffs of $G$ are known) can find a solution of $G$.
\end{definition}

The definition imposes no computational bound on the algorithm ${\cal A}$.
It is to some extent inspired by the work on query-based learning initiated
by~\citet{A87}, in the context of computational learning theory.
Note that ${\cal A}$ may select the queries in an on-line manner, so queries 
can depend on the 
responses to previous queries.

\subsection{Overview of results}

We study a variety of different settings. In Section~\ref{sec:bimatrix}, we
consider bimatrix games. Our first result is a lower bound for computing an
exact Nash equilibrium: in Theorem~\ref{thm:basic}, we show that computing an
exact Nash equilibrium in a $\numstrats\times\numstrats$ bimatrix game has
payoff query complexity $k^2$, even for zero-sum games. In other words, we have
to query every pure strategy profile.

We then turn our attention to approximate Nash equilibria, where we obtain some
more positive results. With the standard assumption that all payoffs lie in the
range~$[0,1]$, we show that, when $2 \le i \le k-1$, the payoff query complexity
of computing a $(1 - \frac{1}{i})$-approximate Nash equilibrium is at most $2k - i
+ 1$ (Theorem~\ref{thm:2playerpayoffs}) and at least $k - i + 1$
(Theorem~\ref{thm:apxlower}.)
We also observe that, when $\epsilon\geq 1-\frac{1}{k}$, no payoff queries are
needed at all, because an $\epsilon$-Nash equilibrium is achieved when
both players mix uniformly over their pure strategies.

The query complexity of computing an approximate Nash equilibrium when $\epsilon
< \frac{1}{2}$ appears to be a challenging problem, and we provide an initial
lower bound in this direction in Theorem~\ref{thm:lowerlog}: we show that the
payoff query complexity of finding a 
$\epsilon$-approximate Nash equilibrium for $\epsilon = \bigO(\frac{1}{\log k})$ is
$\Omega(k \cdot \log k)$. This gives an interesting contrast with the~$\epsilon
\ge \frac{1}{2}$ case. Whereas we can always compute a $\frac{1}{2}$-approximate
with $2k-1$ payoff queries, there exists a 
constant $\epsilon < \frac{1}{2}$ for which this is not the case, as shown 
in~Corollary~\ref{cor:contrast}.

Having studied payoff query complexity in bimatrix games, it is then natural to
look for improved payoff query complexity results in the context of
``structured'' games. In particular, we are interested in {\em concisely
represented} games, where the payoff query complexity may be much smaller than
the number of pure strategy profiles. As an initial result in this direction, in
Section~\ref{sec:gg} we consider graphical games, where we show
(Theorem~\ref{THM:GG}) that for graphical games
with 
constant degree $d$, a Nash equilibrium can be found with a
polynomial number of payoff-queries. This algorithm works by discovering every payoff in the game,
however unlike bimatrix games, this can be done without querying every pure
strategy profile.

Finally, we focus on two different models of congestion games. In
Section~\ref{sec:parallel}, we consider the case of \emph{parallel links},
where the game has a origin and destination vertex, and $m$ parallel links
between them. We show both lower and upper bounds for this setting. If $n$
denotes the number of players, then we obtain a $\log(n) + m$ payoff query lower
bound (Theorem~\ref{thm:LBcongestion}), which applies to both query
models. 
We obtain an upper bound of 
$\bigO\left(\log(n)\cdot \frac{\log^2(m)}{\log\log(m)} + m \right)$ 
normal-queries (Theorem~\ref{thm:UBparallel_n}).
Note that there are $n \cdot m$ different
payoffs in a parallel links game, and so our upper bound implies that you do
not need to discover the entire payoff function in order to solve a parallel
links game.

In Sections~\ref{sec:dags}, \ref{sec:dags2}, \ref{sec:dags3}, we consider the
more general case of symmetric network congestion games on directed acyclic
graphs. We show that if the game has $m$ edges and $n$ players, then we can
find a Nash equilibrium using $m \cdot n$ payoff queries
(Theorem~\ref{thm:UBnetwork}). The algorithm discovers every payoff in the
game, but it only queries a small fraction of the pure strategy profiles.


\section{Related work}

In Section~\ref{sec:pqc} we review some very recent work on the payoff
query complexity of related game-theoretic solution concepts.
In Section~\ref{sec:ew} we review the experimental work that motivated this paper. 
Finally, in Section~\ref{sec:brd} we discuss the relationship with work that analyzes
{\em best-response dynamics} in a game-theretic context.

\subsection{Payoff query complexity}\label{sec:pqc}

A preliminary version of this paper appeared at the ACM conference on Electronic
Commerce~\citep{FGGS13}.
Work that has appeared subsequently has studied query
complexity bounds for general multi-player games, where the main
parameter of interest is the number of players $n$, who usually just
have a small number of pure strategies.
\citet{HN13} obtain an exponential in $n$ lower bound on the query
complexity of finding an exact correlated equilibrium of a general $n$-player game.
Note that any lower bounds for correlated equilibria apply immediately to Nash equilibria,
since Nash equilibria are a more restrictive solution concept.
For {\em approximate} correlated equilibria,
no-regret learning dynamics can be simulated by a randomized payoff query
algorithm, so that the query complexity of
approximate correlated equilibria is polynomial in the number of players~\citep{BB13,HN13}.
\citet{GR13} studied the dependence in more detail, obtaining
upper and lower bounds that are logarathmic in $n$.
However, randomness is needed: \citet{BB13}
show that finding an exact correlated equilibrium in an $n$-player games using
a deterministic querying strategy requires exponentially many queries in $n$.
This result is strengthened by~\citet{HN13}, where it is shown that
deterministic querying strategies require exponentially many queries to find
even a $\frac{1}{2}$-approximate correlated equilibrium.

Approximate \emph{well-supported} Nash equilibria are another approximate
solution concept that have been studied in the context of strategic form
games~\citep{KS10,FGSS12}. \citet{B13} has shown that finding a $10^{-8}$-well
supported Nash equilibrium in an $n$-player game requires exponentially many
queries in~$n$. The query complexity of computing an $\epsilon$-approximate Nash
equilibrium (that need not be well-supported) for constant $\epsilon$ remains
open, although~\citet{GR13} show that it is polynomial if the unknown game can
be specified concisely. These negative results for $n$-player games motivate the
consideration of more structured classes of games, such as congestion games,
which we study in this paper. 

Finally, \citet{FS13} have continued the study of query complexity for bimatrix
games that was initiated in this paper. In particular, they show that randomized
payoff query algorithms can achieve better approximation ratios: there is a
randomized algorithm for finding a $(\frac{3 - \sqrt{5}}{2} + \epsilon)$-Nash
equilibrium in a bimatrix game using $O(\frac{k \cdot \log k}{\epsilon^2})$
payoff queries, and there is a randomized algorithm for finding a $(\frac{2}{3}
+ \epsilon)$-WSNE in a bimatrix game using $O(\frac{k \cdot \log
k}{\epsilon^4})$ payoff queries. They also provide lower bounds for finding
well-supported Nash equilibria in bimatrix games: finding an
$\epsilon$-well-supported Nash equilibrium requires $k - 1$ payoff queries for
any $\epsilon < 1$, even in win-lose games, and finding a
$\frac{1}{3k}$-well-supported Nash equilibrium requires $\Omega(k^2)$ payoff
queries, even in win-lose constant-sum games.

\subsection{Experimental Work}\label{sec:ew}

In {\em Empirical game-theoretic analysis}~(\citet{Wellman06,JSW10}), a game
is presented to the analyst via a set of observations of
strategy profiles (usually, pure) and their corresponding payoffs.
This set of profiles/payoff-vector pairs is called an {\em empirical game}.
In some settings the strategy profiles are randomly generated, but it is typically feasible to obtain
observations via the payoff queries we study here.
The {\em profile selection problem}~\citep{JVW08} is the challenge of choosing
helpful strategy profiles.
The {\em strategy exploration problem}~\citep{JSW10} is the special case
of finding the best way to limit the search to
a small subset of a large set of~strategies. 

\citet{JVW08} envisage a setting where a game (called
a {\em base game}) has a corresponding {\em game simulator},
an implementation in software, which is amenable to
payoff queries; a more general scenario allows the observed payoffs
to be sampled from a distribution associated with the strategy profile.
The distribution is sometimes considered to be due to a noise process, and called the
{\em noisy payoff model} in~\citet{JVW08}.
(In this paper we just consider deterministic payoffs, the ``revealed payoff
model'' in~\citet{JVW08}.)
As noted in~\citet{VWS07}, a profile can be repeatedly
queried to sample from the distribution of payoffs, and thus get an estimate
of the expected values.
The two interacting challenges are to identify helpful queries, and to use
them to find pure-strategy profiles that have low regret
 (where {\em regret} refers to the largest incentive to deviate, amongst the
players.)

\citet{VWS07} study the {\em payoff function approximation task}, 
in which a game belongs to a
known class, and there is a ``\emph{regression}'' challenge to determine 
certain parameters; the information about the game consists of a
random sample of pure profiles and resulting payoff vectors.
However, success is measured by
the extent that the players' predicted behaviour is close to the
behaviour associated with the true payoffs, rather than how well the
true payoff functions are estimated.

Work on specific classes of multi-player games includes the following.
\citet{DVSW09} studies algorithms for learning
graphical games; we consider a graphical game learning algorithm in
Section~\ref{sec:gg}.
\citet{JVW08} apply payoff-query learning to various kinds of games generated by
GAMUT~\citep{NWSL04}, including a class of congestion games. 
\citet{VWS07} investigate a first-price
auction and also a scheduling game, where payoffs are described via a
finite random sample of profile/payoff vector pairs.
Earlier, \citet{SW05} study search for pure Nash
equilibria of strategic-form games (mostly with 5 players and 10 pure strategies).

Most of the experimental work (e.g.,~\citet{SW05,JVW08,DVSW09})
uses 
{\em local search}, in
which profiles that get queried are typically very similar (differing in just one
player's strategy) from previously queried profiles.
Jordan et al.~\citet{JVW08} experiment with local-search type algorithms in which when
a player has the incentive to deviate, the tested profile is updated with that deviation.
\citet{SW05} study search for pure equilibria via best-response
dynamics while maintaining a tabu list, introduced to reduce the risk of cycles.

\subsection{Best-response dynamics and local search}\label{sec:brd}


There is a large body of literature that studies best- and better-response
dynamics for classes of potential games, and gives bounds on the number of steps required
for convergence to pure-strategy equilibria.
These dynamics relate to the payoff query model since they
work by exploring the space of pure profiles, and
receiving feedback consisting of payoffs. The difference is that they purport to
model a decentralized process of selfish behaviour by the players, while the
payoff query model envisages a centralised algorithm that is less constrained.
In this section, we discuss some of the relevant literature.

Local search processes in that each pure
profile is obtained from the previous one by letting a single player move have been 
studied extensively in the literature.
Bounds on the convergence of deterministic best-response dynamics were considered
in \citet{EKM03} and \citet{FGLMR03a}. 
\citet{GS10,GS11} showed polynomial convergence of better-response dynamics for certain \emph{hedonic games}. 
The better-response dynamics considered by \citet{G04} is the basic
randomized local search algorithm, and bounds are obtained for its convergence
to exact equilibrium.
The work in~\citet{BCZ12b} shows that a Nash equilibrium of a bimatrix game can be
found using a polynomial number of better-response queries.
\citet{CS11} study another local search, the {\em $\epsilon$-Nash
dynamics}, and its convergence to approximate equilibria.
\citet{GLMM10} employ controlled local search dynamics (where a sequence of players moves simultaneously)
to computs pure Nash equilibria.
Other papers (e.g.,~\citet{FRV06,BFGGM07}) analyse strongly-distributed dynamics in which
multiple players can move in the same time step; consequently
the dynamics is not a local search. However,
these dynamical systems could all be simulated by payoff query algorithms in which
at each step, at most $nk$ queries are made to determine the change in payoffs available
to players as a result of unilateral deviations. This paper begins to answer the question:
how much better could a payoff query algorithm do, if it were not subject to that
constraint?

Finally, \citet{AEFT11} consider payoff-query algorithms for finding the costs
of paths in graphs. They consider {\em weight discovery protocols} where the aim is to determine
the costs of edges, and {\em shortest path discovery protocols} where the aim is to find
a shortest path. The latter objective is more similar to what we consider, since it can
avoid the need to learn the entire payoff function; also a shortest
path is an equilibrium strategy for the one-player case of a network congestion
game.

\section{Bimatrix games}\label{sec:bimatrix}

In this section, we give bounds on the payoff-query complexity of computing
approximate Nash equilibria of bimatrix games. A \emph{bimatrix game} is a pair
$(R, C)$ of two $k \times k$ matrices: $R$ gives payoffs for the \emph{row
player}, and $C$ gives payoffs for the \emph{column player}. We use $[n]$ to
denote the set $\{1, 2, \dots, n\}$. 
A \emph{mixed strategy} is a probability distribution over $[k]$. A \emph{mixed
strategy profile} is a pair $\s = (\x, \y)$, where $\x$
is a mixed strategy for the row player, and $\y$ is a mixed strategy for the
column player.



Let $\s = (\x, \y)$ be a mixed strategy profile in a $k \times k$ bimatrix game $(R,
C)$. We say that a row $i \in [k]$ is a \emph{best response} for the row
player if $R_i \cdot \y = \max_{j \in [k]} R_j \cdot \y$.  We say that a column
$i \in [k]$ is a best response for the column player if $(\x \cdot C)_i =
\max_{j \in [k]} (\x
\cdot C)_j$. We define the row player's \emph{regret} under $\s = (\x, \y)$ as 
the difference between the payoff of a best response and the payoff that the
row player obtains under $\s$.
More formally, the regret that the row player suffers under $s$ is:
\begin{equation*}
\max_{j \in [k]}( R_j \cdot \y) - \x \cdot R \cdot \y.
\end{equation*}
Similarly, the column player's regret is defined to be:
\begin{equation*}
\max_{j \in [k]}((\x \cdot C)_j) - \x \cdot C \cdot \y.
\end{equation*}
We say that $\s$ is a \emph{mixed Nash equilibrium} if both players have regret
$0$ under $\s$. An \emph{$\epsilon$-Nash equilibrium} is an approximate solution
concept: for every $\epsilon \in [0, 1]$, we say that $\s$ is an $\epsilon$-Nash
equilibrium if both players suffer regret at most $\epsilon$ under $\s$.



We begin with the following simple observation: there are no query-efficient
algorithms for finding \emph{exact} Nash equilibria, even in zero-sum games.
The following theorem shows that, in order to find an exact Nash equilibrium,
we must query all $\numstrats \times \numstrats$ pure strategy profiles.

\begin{theorem}\label{thm:basic}
The payoff query complexity of finding an exact Nash equilibrium of a zero-sum
$\numstrats\times\numstrats$ bimatrix game is $k^2$.
\end{theorem}
\begin{proof}
Consider a generalized version of matching pennies, where the column player
pays~1 to the row player whenever both players choose the same strategy,
otherwise the row player pays~1 to the column player. Note that this is a
zero-sum game, and that it has a unique Nash equilibrium, namely when both
players randomize uniformly over their strategies.
Now suppose each payoff in the game is perturbed by a small quantity, in such a
way as to maintain the zero-sum property.
For small perturbations, there will still be a unique fully-mixed equilibrium
profile,
but it can only be known exactly if all the payoffs are known exactly. Thus, we
cannot find an exact Nash equilibrium in a zero-sum bimatrix game without
querying all $\numstrats \times \numstrats$ pure strategy profiles.
\end{proof}

Theorem~\ref{thm:basic} implies that we cannot devise query-efficient algorithms
for finding exact Nash equilibria. This naturally raises the question of whether
there are query-efficient algorithms for finding \emph{approximate} Nash
equilibria, and we continue by presenting results on this topic. From now on, we will
assume that all payoffs lie in the range $[0,1]$, which is a standard assumption
when finding approximate Nash equilibria.

Our first result is an upper bound. The work of Daskalakis, Mehta, and
Papadimitriou~\citep{DMP} gives a simple algorithm for finding a
$\frac{1}{2}$-Nash equilibrium. We adapt their algorithm to prove the
following result.

\begin{theorem}\label{thm:2playerpayoffs}
Let $i$ be chosen such that $2 \le i \le k - 1$.
The payoff query complexity of finding a $(1 - \frac{1}{i})$-approximate
equilibrium of a $\numstrats\times\numstrats$ bimatrix game is at
most $2k - i + 1$.
\end{theorem}
\begin{proof}
We begin by querying all $\numstrats$ pure profiles where the row player plays
row $1$. This allows us to find the column player's best response to row $1$.
Without loss of generality, we can assume that this is column $1$. Now query
column $1$ against rows $2$ through $k - i + 2$. Note that we have made a total
of $2k - i + 1$ queries. Let row $b$ be a row that maximizes the row player's
payoff against column $1$, among those that we have queried. Let $B = \{1, b\}
\cup [k - i + 3, k]$. We propose the following mixed
strategy profile $\s$: the column player plays column $1$ with probability $1$,
and the row player mixes uniformly over the strategies in $B$. Note that the row
player is mixing between $i$ rows, and thus plays each of them with probability
$\frac{1}{i}$. 

We claim that $\s$ is a $(1 - \frac{1}{i})$-approximate Nash equilibrium. Let
$R$ and $C$ be the actual payoff matrices for the row and column player,
respectively. Note that the row player's best response to column $1$ is either
$b$, or one of the strategies between $k - i + 3$ and $k$. Call this row $j$,
and observe that $j \in B$. The row player's regret can be expressed as
\begin{align*}
R_{j, 1} - \sum_{\ell \in B} \frac{1}{i} \cdot R_{\ell, 1} &= (1 - \frac{1}{i})
\cdot R_{j,
1} - \sum_{\ell \in B \setminus \{j\}} \frac{1}{i} \cdot R_{\ell, 1} \\
& \le (1 - \frac{1}{i}) \cdot R_{j, 1} \\
& \le (1 - \frac{1}{i}).
\end{align*}
Let $j'$ be a pure best response of the column player under $\s$. Observe that,
since column $1$ is a best response against row $1$, we have that $C_{1, j'} -
C_{1, 1} \le 0$. The column player's regret can be expressed as:
\begin{align*}
\sum_{\ell \in B} \frac{1}{i} \cdot C_{\ell, j'} - \sum_{\ell \in B} \frac{1}{i}
\cdot C_{\ell, 1} &= \sum_{\ell \in B} \frac{1}{i} \cdot (C_{\ell, j'} - C_{\ell, 1}) \\
& \le \sum_{\ell \in B \setminus \{1\}} \frac{1}{i} \cdot (C_{\ell, j'} - C_{\ell, 1}) \\
& \le \sum_{\ell \in B \setminus \{1\}} \frac{1}{i} \\
&= 1 - \frac{1}{i}.
\end{align*}
Thus we have shown that both players suffer regret at most $1-\frac{1}{i}$.
\end{proof}

Note that, when $i = 2$, the algorithm of Theorem~\ref{thm:2playerpayoffs}
finds a $\frac{1}{2}$-Nash equilibrium using the same technique as the algorithm
from~\citet{DMP}. For $i > 2$, our algorithm uses fewer payoff queries in
exchange for a worse approximation. When $i = k-1$, our algorithm uses $k+2$
payoff queries in order to find a $(1 - \frac{1}{k-1})$-Nash equilibrium. It
turns out that, for $\epsilon\geq 1-\frac{1}{k}$, we do not need to make any
payoff queries at all: an $\epsilon$-Nash equilibrium is obtained when both
players play the uniform distribution over their strategies, because both
players must place at least $\frac{1}{k}$ of their probability on a pure best
response.

We now turn our attention to lower bounds. We complement the
result of Theorem~\ref{thm:2playerpayoffs} by showing lower bounds for finding
$(1 - \frac{1}{i})$-Nash equilibria, when $i$ is in the range $2 \le i \le k -
1$. First, we prove an auxiliary lemma.


\begin{lemma}
\label{lem:noquery}
Suppose that all payoff queries return $0$ for both players.
Let $i$ be chosen such that $2 \le i \le k-1$, and let $\s$ be a
$(1-\frac{1}{i})$-Nash equilibrium. Any column that receives no queries
must be assigned at least $\frac{1}{i}$ probability by $\s$.
\end{lemma}
\begin{proof}
Suppose, for the sake of contradiction, that $c$ is a column that received no
queries, and that $c$ is assigned strictly less than $\frac{1}{i}$ probability
by $\s$. We construct a column player matrix $C$ as follows: 
\begin{equation*}
C_{j, j'} = \begin{cases}
1 & \text{if $j' = c$,}\\
0 & \text{otherwise.}
\end{cases}
\end{equation*}
Since~$c$ received no queries, $C$ is consistent with all queries that have been
made. Note that the column player's payoff under $\s$ is strictly less than
$\frac{1}{i}$, and that the payoff of playing $c$ as a pure strategy is $1$.
Thus, the column player's regret is strictly greater than $1 - \frac{1}{i}$,
which contradicts the fact that $\s$ is a $(1-\frac{1}{i})$-Nash equilibrium
\end{proof}

Now we can show our lower bound.

\begin{theorem}
\label{thm:apxlower}
Let $i$ be chosen such that $2 \le i \le k-1$. 
The payoff query complexity of finding a
$(1-\frac{1}{i})$-approximate Nash equilibrium of a 
$\numstrats\times\numstrats$ bimatrix game is at least $k-i + 1$.
\end{theorem}
\begin{proof}
Assume that all payoff queries return $0$ for both players.
Suppose, for the sake of contradiction, that an algorithm makes fewer than
$k-i+1$
payoff queries, and then outputs~$\s$ as a $(1-\frac{1}{i})$-Nash equilibrium.
It follows that there must be at least $i$ columns that have received no payoff
queries at all,
 and without loss of generality, we can
assume that these are columns $1$ through $i$. By Lemma~\ref{lem:noquery}, we
know that $\s$ must assign exactly $\frac{1}{i}$ probability to each of the
columns $1$ through $i$. Since there are $k$ rows, there is at least one row $r$
that receives probability at most $\frac{1}{k}$ under $\s$. We construct a row
player payoff matrix~$R$ as follows:
\begin{equation*}
R_{j,j'} = \begin{cases}
1 & \text{if $j = r$ and $1 \le j' \le i$,}\\
0 & \text{otherwise.}
\end{cases}
\end{equation*}
Since columns $1$ through $i$ were not queried, $R$ is consistent with all
queries that have been made so far. The row player's payoff under $\s$ is at
most $\frac{1}{k}$. On the other hand, the row player would receive payoff $1$
for playing $r$ as a pure strategy. Thus, the row player's regret is at least:
\begin{equation*}
1 - \frac{1}{k} > 1 - \frac{1}{i}.
\end{equation*}
This contradicts the fact that $\s$ is a $(1 - \frac{1}{i})$-Nash equilibrium.
\end{proof}

As a consequence of the previous two theorems, when $2 \le i \le k-1$, we have
that the payoff query complexity of finding a $(1 - \frac{1}{i})$-Nash
equilibrium lies somewhere in the range $[k-i+1, 2k - i +1]$. Determining the
precise payoff query complexity for this case is an open problem.

So far, we have only considered $\epsilon$-Nash equilibria with $\epsilon \ge
\frac{1}{2}$. Of course, the most interesting challenge is to determine the
payoff query complexity for values of $\epsilon<\frac{1}{2}$. By our previous
results, we know that the payoff query complexity for finding a
$\frac{1}{2}$-Nash equilibrium is $\bigO(k)$, and the payoff query complexity for
finding a $0$-Nash equilibrium is $\bigO(k^2)$, but we do not know how the payoff
query complexity behaves as we vary $\epsilon$ between $0$ and $\frac{1}{2}$.

Our final result in this section will be to show a lower bound for $\epsilon =
\bigO(\frac{1}{\log k})$. We will show that finding a $\bigO(\frac{1}{\log k})$-Nash
equilibrium requires $\Omega(k \log k)$ payoff queries. This establishes that
there are some positive values of $\epsilon$, for which computing an
$\epsilon$-Nash equilibrium is asymptotically harder than computing a
$\frac{1}{2}$-Nash equilibrium.

We will use the following class of bimatrix games, which have been previously
used in Theorem 1 of~\citet{FNS07}.

\begin{definition}
Let $\G_\numstrat$ be the class of strategic-form games where the column player
has $\numstrat$ pure strategies and the row player has $\binom{\numstrat}{\numstrat/2}$ pure
strategies (where we assume $\numstrat$ is even).
Let $G_\numstrat\in \G_\numstrat$ be the win-lose constant-sum game in
which each row of the row
player's payoff matrix has $\frac{\numstrat}{2}$ 1's and $\frac{\numstrat}{2}$ 0's, all rows being
distinct. The column player's payoffs are one minus the row player's payoffs.
\end{definition}

It is well-known that every zero-sum game has a unique \emph{value}, which is the
payoff that both players can guarantee for themselves, independent of what the
other player does.
The value of each game $G_\numstrat \in \G_\numstrat$ is $\frac{1}{2}$
since either player can obtain payoff $\frac{1}{2}$ by using the uniform
distribution over their pure strategies. Our first lemma shows that, if the
column player deviates from this by placing too much probability on a single
column, then the row player can take advantage and increase his payoff.

\begin{lemma}\label{lem:mixed}
Suppose that in game $G_\numstrat \in \G_\numstrat$, the column player places
probability $\alpha>1/\numstrat$ on some column. Then the row player can obtain
a payoff strictly greater than
$\frac{1}{2}+\frac{\alpha}{2}-\frac{1}{2\numstrat}$.
\end{lemma}

\begin{proof}
Let $j$ be a column that the column player plays with probability $\alpha$.
Let $R_j$ be the set of rows where the row player obtains payoff 1 against column $j$.
Suppose the row player plays the uniform distribution over rows in $R_j$.
When the column player plays $j$, the row player receives payoff 1.
Let $j'\not=j$ be a column, and consider the payoffs to the row player where $j'$ intersects $R_j$.
A fraction $\frac{\numstrat/2-1}{\numstrat-1}$ of these entries pay the row player 1, while a fraction
$\frac{\numstrat/2}{\numstrat-1}$ pay the row player 0.
Consequently whenever the column player plays $j'\not=j$, the row player's expected
payoff is $\frac{\numstrat/2-1}{\numstrat-1}$.
Thus with probability $\alpha$ the row player receives payoff~1, and with probability
$1-\alpha$ he receives payoff $\frac{\numstrat/2-1}{\numstrat-1}$. Thus, the
payoff to the row player is
%
\begin{align*}
\alpha + (1-\alpha)\frac{\numstrat/2-1}{\numstrat-1} &=
\frac{1}{2}+\frac{1}{2}\alpha - \frac{1-\alpha}{2(\numstrat-1)} \\
& >\frac{1}{2}+\frac{1}{2}\alpha - \frac{1-1/\numstrat}{2(\numstrat-1)} \\
& =\frac{1}{2}+\frac{1}{2}\alpha - \frac{1}{2\numstrat}\ ,
\end{align*}
which completes the proof.
\end{proof}

We now use the bound from the previous lemma to show that, in an approximate
Nash equilibrium for $G_\ell$, the column player cannot place too much
probability on any individual column.

\begin{corollary}\label{cor:alpha}
Let $\alpha>\frac{1}{k}$, and let $\epsilon = \frac{1}{4}
(\alpha-\frac{1}{\numstrat})$. In every $\epsilon$-Nash equilibrium of
$G_\numstrat \in \G_\numstrat$, the column player plays each individual column with probability
at most $\alpha$.
\end{corollary}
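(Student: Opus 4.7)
The plan is to prove the corollary by contradiction, combining Lemma~\ref{lem:mixed} with the minimax structure of the constant-sum game $G_\numstrat$. Suppose that in some $\epsilon$-Nash equilibrium of $G_\numstrat$, the column player assigns probability $\alpha' > \alpha$ to some column $j$. Since $\alpha > 1/\numstrat$ we also have $\alpha' > 1/\numstrat$, so Lemma~\ref{lem:mixed} (applied with $\alpha'$ in place of $\alpha$) yields a row-player strategy whose expected payoff strictly exceeds $\frac{1}{2} + \frac{\alpha'}{2} - \frac{1}{2\numstrat}$.

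Next I would bound the row player's best-response payoff from above using the fact that $G_\numstrat$ is constant-sum with value $\tfrac{1}{2}$. The column player can guarantee payoff at least $\tfrac{1}{2}$ by playing uniformly, so in any $\epsilon$-Nash equilibrium the column player's current payoff is at least $\tfrac{1}{2}-\epsilon$ (otherwise deviating to the uniform strategy would improve their payoff by more than $\epsilon$). Hence the row player's current expected payoff is at most $\tfrac{1}{2}+\epsilon$, and by the row player's $\epsilon$-Nash condition their best-response payoff is at most $\tfrac{1}{2}+2\epsilon$.

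Combining the two estimates gives $\frac{\alpha'}{2} - \frac{1}{2\numstrat} < 2\epsilon$, i.e.\ $\alpha' < 4\epsilon + \frac{1}{\numstrat}$. Substituting $\epsilon = \frac{1}{4}(\alpha - \frac{1}{\numstrat})$ yields $\alpha' < \alpha$, contradicting the assumption $\alpha' > \alpha$.

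The main subtlety — not really an obstacle, but worth flagging — is the appearance of $2\epsilon$ rather than $\epsilon$ in the upper bound on the row player's best response. It arises because the $\epsilon$-Nash condition is effectively used twice (once on each player, via the constant-sum relation), and this is precisely what forces the factor $\tfrac{1}{4}$ in the definition of $\epsilon$.
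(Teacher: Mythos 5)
Your proof is correct and uses the same core ingredients as the paper's: Lemma~\ref{lem:mixed}, the constant-sum structure of $G_\numstrat$ with value $\tfrac{1}{2}$, and both players' $\epsilon$-Nash conditions. The paper runs the argument in the dual order (bound the row player's actual payoff from below via the row player's $\epsilon$-Nash condition, then exhibit a column-player regret exceeding $\epsilon$), but this is just a rearrangement of the same chain of inequalities and yields the identical $2\epsilon$ slack that your remark correctly attributes to the factor $\tfrac{1}{4}$ in the definition of $\epsilon$.
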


\begin{proof}
Suppose, for the sake of contradiction, that there is an $\epsilon$-Nash
equilibrium $\s$ in which that the column player assigns column~$j$ probability
strictly greater than $\alpha$. Then, by Lemma~\ref{lem:mixed}, the row player's
payoff is strictly greater than
$\frac{1}{2}+\frac{\alpha}{2}-\frac{1}{2\numstrat}$, and therefore the row
player's payoff in $\s$ must be strictly greater than:
\begin{equation*}
\frac{1}{2}+\frac{\alpha}{2}-\frac{1}{2\numstrat}-\epsilon
=\frac{1}{2}+\epsilon.
\end{equation*}
Therefore, the column player obtains payoff strictly less than
$\frac{1}{2}-\epsilon$. Since the value of $G_\numstrat$ is~$\frac{1}{2}$, the
column player's regret in $\s$ is strictly greater than $\epsilon$, and
therefore $\s$ is not an $\epsilon$-Nash equilibrium.
\end{proof}

We can now provide a lower bound for the payoff query complexity of finding an
approximate Nash equilibrium for the games in $\G_\numstrat$.

\begin{lemma}\label{thm:epsne}
For any $\epsilon<\frac{1}{12}$, and any even $\numstrat\geq 8$, the payoff
query complexity of finding an $\epsilon$-Nash equilibrium for the games in
$\G_\numstrat$ is at least $\frac{1}{2} \cdot
\binom{\numstrat}{\numstrat/2}\cdot (\frac{1}{16\epsilon+4/\numstrat})$.
\end{lemma}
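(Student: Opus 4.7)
The plan is to prove this via an adversary argument built on Corollary~\ref{cor:alpha}. Set $\alpha := 4\epsilon + 1/\numstrat$ and $m := \binom{\numstrat}{\numstrat/2}$. The corollary tells us that any $\epsilon$-Nash equilibrium $(x^*,y^*)$ of $G_\numstrat$ itself satisfies $\|y^*\|_\infty \leq \alpha$, so the support of $y^*$ has size at least $1/\alpha$; and because $G_\numstrat$ is constant-sum with value $\tfrac{1}{2}$, the row player's expected payoff in any such $\epsilon$-NE lies in $[\tfrac{1}{2} - \epsilon, \tfrac{1}{2} + \epsilon]$.

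Fix an algorithm $\mathcal A$ making $N < m/(4\alpha)$ queries and pit it against an adversary that answers consistently with $G_\numstrat$. The family of games in $\G_\numstrat$ consistent with the resulting query/answer transcript always contains $G_\numstrat$ itself, together with many variants obtained by modifying payoffs at unqueried entries. Any output $(x^*,y^*)$ of $\mathcal A$ must be an $\epsilon$-NE of every compatible game in $\G_\numstrat$; in particular it must be an $\epsilon$-NE of $G_\numstrat$, which by Corollary~\ref{cor:alpha} forces $\|y^*\|_\infty \leq \alpha$ and $|\mathrm{supp}(y^*)| \geq 1/\alpha$ (otherwise the output already fails).

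The counting step is pigeonhole. With fewer than $m/(4\alpha)$ total queries, at most $m/2$ rows can have been queried at $\geq 1/(2\alpha)$ positions, so more than $m/2$ rows are \emph{lightly queried}, each with fewer than $1/(2\alpha)$ revealed entries. I would pick such a row $i^*$ lying outside $\mathrm{supp}(x^*)$ and exhibit a variant game (still in $\G_\numstrat$ and still agreeing with the transcript) in which the unqueried entries of row $i^*$ are filled with $1$'s on the heaviest unqueried coordinates of $y^*$. Because at most $1/(2\alpha)$ of the $\geq 1/\alpha$ coordinates of $\mathrm{supp}(y^*)$ have been revealed for row $i^*$, the adversary retains control over more than half of $\mathrm{supp}(y^*)$ and can push $y^*\cdot R_{i^*}$ above $\tfrac{1}{2}+2\epsilon$.

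The main obstacle is the final payoff calculation: I need to simultaneously control (a) how much mass of $y^*$ the adversary can load onto row $i^*$, using the ``spread'' guarantee from $\|y^*\|_\infty \leq \alpha$ to lower-bound this; (b) the requirement that the completed matrix remain a legal game in $\G_\numstrat$ (so that payoffs lie in $[0,1]$ and the variant really is in the family); and (c) the expected row payoff $E \leq \tfrac{1}{2}+\epsilon$ in the variant game. Combining these gives $y^*\cdot R_{i^*} > E + \epsilon$, violating the $\epsilon$-NE property of $(x^*,y^*)$ for the variant. The hypotheses $\epsilon < \tfrac{1}{8}$ and $\numstrat \geq 8$ keep $\alpha$ bounded away from $1$ and keep $1/(2\alpha)$ well below $\numstrat$, so the adversary has enough uncommitted coordinates for the strict excess to close up.
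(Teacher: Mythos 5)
Your overall strategy matches the paper's: run $\mathcal A$ on $G_\numstrat$, invoke Corollary~\ref{cor:alpha} to bound $\|y^*\|_\infty\leq\alpha$ and the row payoff by $\tfrac12+\epsilon$, find a lightly-queried row, and overwrite its unqueried entries with $1$'s to manufacture a high-paying deviation. However, your counting threshold is too loose to close the argument. You classify a row as ``lightly queried'' if it has fewer than $1/(2\alpha)$ revealed entries; then the mass of $y^*$ on revealed entries of that row is at most $\alpha\cdot\tfrac{1}{2\alpha}=\tfrac12$, so the adversary controls only mass $\geq\tfrac12$, giving $y^*\cdot R'_{i^*}\geq\tfrac12$. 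This is \emph{not} above $\tfrac12+2\epsilon$, so the regret bound you need does not materialise, and you correctly flag this as the obstacle. The paper uses the much simpler single-row pigeonhole: with fewer than $m/(4\alpha)=m\cdot\bigl(\tfrac{1}{16\epsilon+4/\numstrat}\bigr)$ total queries, \emph{some} row $r$ receives fewer than $1/(4\alpha)$ queries, so the revealed mass in that row is at most $\tfrac14$ and the adversary controls mass $\geq\tfrac34$. Filling \emph{all} unqueried entries of $r$ with $1$'s (there is no reason to restrict to the ``heaviest unqueried coordinates''---that only weakens the bound) yields a deviation paying $\geq\tfrac34$, hence regret $\geq\tfrac34-(\tfrac12+\epsilon)=\tfrac14-\epsilon>\epsilon$ precisely when $\epsilon<\tfrac18$. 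The factor-of-$4$ slack between the revealed mass and the full mass is what makes the hypotheses of the lemma close up.

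Two secondary remarks. First, the extra requirement ``$i^*$ outside $\mathrm{supp}(x^*)$'' cannot always be met---$x^*$ may be fully mixed, in which case its support is all $m$ rows and even your $>m/2$ light rows give you no candidate outside the support. The paper does not impose this requirement; it simply compares the deviation payoff $\geq\tfrac34$ in the modified game against the row payoff $\leq\tfrac12+\epsilon$ carried over from $G_\numstrat$. Second, because the game $G_\numstrat$ fixes the revealed entries, you cannot choose to place $1$'s only on the ``heaviest unqueried coordinates of $y^*$''---the only freedom the adversary has is on the unqueried entries, and it should use all of them.
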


\begin{proof}
Let ${\cal A}$ be a payoff query algorithm for finding an $\epsilon$-Nash
equilibrium, and, for the sake of contradiction, suppose that $\mathcal{A}$ makes
fewer than $\frac{1}{2} \cdot \binom{\numstrat}{\numstrat/2}\cdot
(\frac{1}{16\epsilon+4/\numstrat})$ many payoff queries when processing
$G_\numstrat$.
Let $\s$ be the mixed strategy profile that ${\cal A}$ outputs for
$G_\numstrat$. By Corollary~\ref{cor:alpha}, we know that no column in $\s$ is
assigned more than $\alpha=4\epsilon+\frac{1}{\numstrat}$ probability. We also
know that in $\s$, the row player's payoff is at most $\frac{1}{2}+\epsilon$,
since $\s$ is an $\epsilon$-Nash equilibrium of a constant-sum game with value
$\frac{1}{2}$. Since ${\cal A}$ made fewer than
$\frac{1}{2} \cdot \binom{\numstrat}{\numstrat/2}\cdot
(\frac{1}{16\epsilon+4/\numstrat})$ payoff queries, at least half of the rows
received fewer than $(\frac{1}{16\epsilon+4/\numstrat})$ queries. Since
$\numstrat \geq 8$, this implies that there are at least $\frac{1}{2} \cdot
\binom{8}{4} = 45$ such rows. Thus, there is one such row, call it $r$, that is
played with probabilty strictly less than $\frac{1}{12}$ in $\s$.

Since $\s$ assigns at most $\alpha$ probability to each column, the total amount
of probability that $\s$ assigns to the queried portion of $r$ is at most
$\alpha(\frac{1}{16\epsilon+4/\numstrat}) = \frac{1}{4}$. Now suppose that we
modify $G_\numstrat$ by replacing all un-queried entries of $r$ with payoffs of
1 for the row player. Call this new game $G'_\numstrat$. Note that ${\cal A}$
outputs the same strategy profile $\s$ for both $G_\numstrat$ and
$G'_\numstrat$.

Let $p$ be the payoff to the row player of playing $\s$ in $G_\numstrat$, and
let $p'$ be the payoff to the row player of playing $\s$ in $G'_\numstrat$.
Since $r$ is played with probability less than $\frac{1}{12}$ we have:
\begin{align*}
p'  &\le p + \frac{1}{12} \\
&\le \frac{7}{12} + \epsilon
\end{align*}
However, the row player's best response payoff is at least $\frac{3}{4}$ in
$G'_\numstrat$, so we have:
\begin{equation*}
p' \ge \frac{3}{4} - \epsilon
\end{equation*}
Therefore, we can conclude that:
\begin{align*}
\frac{7}{12} + \epsilon & \ge \frac{3}{4} - \epsilon \\
2 \epsilon &\ge \frac{2}{12}.
\end{align*}
However, this is impossible because $\epsilon < \frac{1}{12}$.
\end{proof}

Finally, we can extend the lower bound to square bimatrix games.

\begin{lemma}\label{lem:main}
For $\numstrats\times\numstrats$ bimatrix games, the payoff query
complexity of finding an $\epsilon$-Nash equilibrium, for
$\epsilon\leq\frac{1}{8}$, is at least
${\numstrats}\cdot(\frac{1}{32/\log\numstrats+64\epsilon})$.
\end{lemma}

\begin{proof}
Let $\numstrats'$ be the largest number of the form $\binom{\numstrat}{\numstrat/2}$
that is smaller than $\numstrats$.
We have $\numstrats'\geq \numstrats/4$ and $\numstrat\geq \log\numstrats/2$.
By Lemma~\ref{thm:epsne},
the number of payoff queries needed to find an $\epsilon$-Nash equilibrium for
games in
$\G_{\numstrats}$ is at least:
\begin{align*}
\binom{\numstrat}{\numstrat/2}\cdot \Bigl(\frac{1}{16\epsilon+4/\numstrat}\Bigr) &= 
\numstrats'\Bigl(\frac{1}{4/\numstrat+16\epsilon}\Bigr) \\
&\geq \frac{\numstrats}{4} \Bigl(\frac{1}{4/\numstrat+16\epsilon}\Bigr) \\
&\geq \frac{\numstrats}{4} \Bigl(\frac{1}{8/\log(\numstrats)+16\epsilon}\Bigr) \\
&= \numstrats \Bigl(\frac{1}{32/\log(\numstrats)+64\epsilon}\Bigr).
\end{align*}
The games in $\G_{\numstrat}$ can be written down as a $\numstrats\times\numstrats$
game, by duplicating rows and columns. Note that these operations preserve
approximate equilibria. 
\end{proof}

By taking $\epsilon \in \bigO(\frac{1}{\log k})$ in the previous lemma, we arrive at
our final theorem.

\begin{theorem}
\label{thm:lowerlog}
For $\numstrats\times\numstrats$ bimatrix games, the payoff query
complexity of finding a $\epsilon$-Nash equilibrium for $\epsilon \in
\bigO(\frac{1}{\log k})$, is $\Omega(k \cdot \log k)$.
\end{theorem}

Recall, from Theorem~\ref{thm:2playerpayoffs}, that we can always find a
$\frac{1}{2}$-Nash equilibrium using $2k - 1$ payoff queries. The following
corollary of Lemma~\ref{lem:main} shows that there are some constant values of
$\epsilon$ that require more payoff queries.

\begin{corollary}
\label{cor:contrast}
There is a constant value of $\epsilon > 0$ for which finding an
$\epsilon$-Nash equilibrium of a $\numstrats\times\numstrats$ bimatrix game
requires strictly more than $2k - 1$ payoff queries.
\end{corollary}

\begin{proof}
Consider, for example, setting
$\epsilon = \frac{1}{512}$ in Lemma~\ref{lem:main}.
Then, for the family
of games in $\G_l$ with $l > 2^{256}$, we have a lower bound
of 
\begin{equation*}
k \cdot \left( \frac{1}{\frac{32}{\log k} + 0.0064} \right) > k \cdot
\frac{1}{0.125 + 0.125}
= 4 \cdot k,
\end{equation*} 
on the number of payoff queries.
\end{proof}

An interesting question that remains is whether one can a show a superlinear
lower bound on the number of payoff queries required for a constant $\epsilon$.

\section{Graphical games}\label{sec:gg}

In this section, we give a simple payoff query-based algorithm for graphical
games. In a $n$-player {\em graphical game}~\citep{KLS01} the players lie at the
vertices of a degree-$\degree$ graph, and a player's payoff is a function of the
strategies of just himself and his neighbors. If every player has $k$ pure
strategies, then the number of payoff values needed
to specify such a game is $\numplayers\cdot\numstrats^{\degree+1}$ which, in
contrast with strategic-form games, is polynomial (assuming $\degree$ is a
constant).

Previously, \citet{DVSW09} have carried out experimental work on
payoff queries for graphical games. They compare a number of techniques; the
algorithm we give here is polynomial-time but would likely be less efficient in
practice. Similar to~\citet{DVSW09}, we assume the underlying graph $G$ is
unknown, and we want to induce the structure of $G$, and corresponding payoffs.

\begin{theorem}\label{THM:GG}
For constant $d$, the payoff query complexity of degree $\degree$ graphical games
is polynomial. 
\end{theorem}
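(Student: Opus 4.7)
The plan is to learn every entry of the game's payoff tensors using polynomially many queries, at which point the algorithm can compute a Nash equilibrium with unbounded additional computation (the payoff query complexity notion places no runtime restriction). A degree-$\degree$ graphical game is specified by at most $\numplayers \cdot \numstrats^{\degree+1}$ payoff values, so the goal is to extract all of them using a polynomial number of queries despite the fact that we do not a priori know the underlying graph.

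Fix an arbitrary default strategy profile $s^* \in [\numstrats]^\numplayers$. For every subset $T \subseteq [\numplayers]$ of size $\degree+1$ and every assignment $t \in [\numstrats]^T$, query the profile $(t,s^*_{-T})$ in which players in $T$ play according to $t$ and the remaining players play according to $s^*$. The total number of queries is
\[
\binom{\numplayers}{\degree+1}\cdot \numstrats^{\degree+1} = O\bigl(\numplayers^{\degree+1}\numstrats^{\degree+1}\bigr),
\]
which is polynomial in $\numplayers$ and $\numstrats$ because $\degree$ is constant. Note that each query reveals the payoff of every player at that profile.

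I then argue that these queries reconstruct each function $u_p$. First, identify the closed neighborhood $N[p]$: declare $q \in N[p]$ iff some pair of our queries that differs only in the strategy assigned to $q$ yields different payoffs to $p$. Since $|N[p]| \le \degree+1$, there is a $T$ of size $\degree+1$ with $N[p] \subseteq T$; for that $T$ the players outside $T$ are set to $s^*$ and do not influence $u_p$, so every genuine dependency of $u_p$ on a neighbor is exhibited among our queries. Conversely, if $q \notin N[p]$ then $u_p$ is invariant under any change of $q$'s strategy, so no spurious witness can arise. Once $N[p]$ is known, for any pure profile $s$ we pick a $T$ of size $\degree+1$ containing $N[p]$ and read off the queried profile $(t,s^*_{-T})$ whose $t$ agrees with $s$ on $N[p]$; this equals $u_p(s)$ because the coordinates outside $N[p]$ are irrelevant to $u_p$. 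With the complete payoff tensor of every player now known, an equilibrium is computed by any means.

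The main obstacle is the neighborhood identification step. The worry is that two players who are genuine neighbors could have payoffs coinciding on all queried pairs of profiles that differ at that coordinate, making us miss an edge. Querying every $(\degree+1)$-subset $T$, rather than a cheaper family such as $\{p,q\}$ padded with default players, is precisely what rules this out: any true dependency of $u_p$ on $q$ manifests at some joint configuration of the players in $N[p]$, and that configuration lies within one of the queried profiles corresponding to a $T$ containing $N[p]$.
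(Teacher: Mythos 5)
Your proposal is correct and takes essentially the same approach as the paper: Algorithm~\ref{alg:gg} queries exactly the set of profiles in which at most $\degree+1$ players deviate from a fixed default strategy, infers the affects graph by looking for witness pairs in that set differing in a single player's strategy, and then reads off each payoff function, with the same $O((\numplayers\numstrats)^{\degree+1})$ query bound. The only difference is that you spell out explicitly why a genuine dependency must be witnessed inside the queried set (by embedding $N[p]$ in some $(\degree+1)$-subset), a point the paper's proof leaves implicit.
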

\begin{proof}
Algorithm~\ref{alg:gg} constructs a directed graph $G$ for
the (initally unknown) game, along with the payoff function.
$G$ is the ``affects graph''~\citep{GP06} in which a directed edge $(p',p)$ has the
meaning that the behaviour of $p'$ may affect $p$'s payoff.
Note that in Step~\ref{step-S}, $|S| < (n\cdot\numstrats)^{d+1}$.
In a degree-$\degree$ graphical game, any player $p$'s payoffs may be affected
by his own strategy, and the strategies of at most $d$ neighbours $p'$
for which edges $(p',p)$ exist.
The existence of edge $(p',p)$ is equivalent to the
existence of strategy profiles $\s$, $\s'$ that differ only in $p'$'s
strategy and $p$'s payoff. This is what Algorithm~\ref{alg:gg}
checks for.
Finally, when the edges, and hence neighborhoods of the graph game have been
found, it is simple to read off each player's payoff matrix from the data
in Step~\ref{querystep}.
\end{proof}

\begin{algorithm}
\begin{algorithmic}[1]
\State Initialize graph $G$'s vertices to be the player set, with no edges
\State\label{step-S}{Let $S$ be the set of pure profiles in which at
          least $n-(d+1)$ players play 1.}
\State\label{querystep}{Query each element of $S$.}
\ForAll {players $\player$, $\player'$}{
  \If {$\exists\s,\s' \in S$ that differ only in $p$'s payoff and $p'$'s strategy}
  \State add directed edge $(p,p')$ to graph
\EndIf
}
\EndFor
\ForAll {players $p$}
\State Let $N_p$ be $p$'s neighborhood in $G$
\State Use elements of $S$ to find $p$'s payoffs as a function of strategies of $N_p$
\EndFor
\end{algorithmic}
\caption{\textsc{GraphicalGames}}\label{alg:gg}
\end{algorithm}

Algorithm~\ref{alg:gg} learns the entire payoff function with polynomially many
queries, but there are a couple of important caveats. First, although the payoff
query complexity is polynomial, the computational complexity is probably not
polynomial, since it is PPAD-complete to actually compute an approximate Nash
equilibrium for graphical games~\citep{DGP}. Second, while
Algorithm~\ref{alg:gg} avoids querying all of the exponentially-many
pure-strategy profiles, it works in a brute-force manner that learns the entire
payoff function. It is natural to prefer algorithms that find a solution without
learning the entire game, such as those that we give for
Theorem~\ref{thm:2playerpayoffs} and Theorem~\ref{thm:UBparallel_n}.

\section{Congestion games}
\label{sec:congest}


In this section, we give bounds on the payoff-query complexity of
finding a pure Nash equilibrium in symmetric network congestion games.
A congestion game is defined by a tuple $\Gamma=(\N, E, (S_i)_{i\in\N},
(f_e)_{e\in E})$. Here, $\N=\{1, 2, \ldots, n\}$ is a set of $n$ players and
$E$ is a set of resources. Each player chooses  as her \emph{strategy} a set
$s_i\subseteq E$ from a given \emph{set of available strategies} $S_i\subseteq
2^E$. 
Associated with each resource $e\in E$ is a non-negative, non-decreasing
function $f_e: \mathbb{N}\mapsto \mathbb{R}^+$. These functions describe
\emph{costs} (latencies) to be charged to the players for using resource~$e$.
An outcome (or strategy profile) is a choice of strategies $\s = (s_1, s_2, ..., s_{n})$ by
players with $s_i \in S_i$. For an outcome $\s$ define
$n_e(\s)=|i\in\N: e\in s_i|$ as
the number of players that use resource $e$. The \emph{cost} for
player $i$ is defined by $c_i(\s)=\sum_{e\in s_i} f_e(n_e(\s))$.
%
 A \emph{pure Nash equilibrium} is an outcome $\s$ where no player has an
incentive to deviate from her current strategy. Formally, $\s$ is a pure Nash equilibrium if
for each player $i\in\N$ and $s_i' \in S_i$, which is an alternative strategy for
player $i$, we have $c_i(\s) \leq c_i(\s_{-i}, s_i')$.
Here $(\s_{-i}, s_i')$ denotes the outcome that results when player $i$ changes
her strategy in $\s$ from $s_i$ to $s_i'$.

In a \emph{network congestion game}, resources correspond to the edges in a
directed multigraph $G=(V,E)$. Each player $i$ is assigned an origin node $o_i$,
and a destination node $d_i$.
A strategy for player~$i$ consists of a sequence of edges that form a directed
path from $o_i$ to $d_i$, and the strategy set $S_i$ consists of all such paths.
In a \emph{symmetric} network congestion game all players have the same origin
and destination nodes. We write a symmetric network congestion game as
$\Gamma=(\N, V, E, (f_e)_{e\in E}, o, d)$, where collectively~$V$,~$E$,~$o$, and~$d$
succinctly define the strategy space $(S_i)_{i\in\N}$.
We consider two types of network, directed acyclic graphs, and the special case
of parallel links.
We assume that initially we only know the number of players $n$ and the
strategy space.
The latency functions are completely unknown initially.
As discussed in Section~\ref{sec:querymodel}, we use several different 
querying models for congestion games.

\subsection{Parallel links}
\label{sec:parallel}

In this section, we consider congestion games on $m$ parallel links.  
We present a lower bound and an upper bound on the query complexity of 
finding an exact pure equilibrium of these games.
To simplify the presentation of the algorithmic ideas of our upper bound we
introduce a stronger type of query that we call an \emph{over-query}.
Recall from Definition~\ref{def:congest_query} that for a query $q = (q_1, q_2, \dots, q_m)$, 
we denote by $Q$ the total number of players used in the query, i.e., $Q =\sum_{1
\le i \le m} q_i$.

\begin{definition}
\label{def:congest_over-query}
An \emph{over-query} is a query with $n < Q \le mn$.
\end{definition}

First, we present a simple lower bound. Then, we present an algorithm,
Algorithm~\ref{fig_alg1}, that uses over-queries. Finally, we extend
Algorithm~\ref{fig_alg1} to Algorithm~\ref{fig_alg1b}, which uses only normal
queries.

\paragraph{Lower bound.} In the following construction, we
show that, if there are two links, the querier can do no better than
performing binary search in order to find an equilibrium, which gives a lower
bound of $\log(n)$ many queries. 

\begin{theorem}\label{thm:LBcongestion}
	A querier must make $\log(n)$ queries to determine a pure
equilibrium of a symmetric network congestion game played on parallel links.
\end{theorem}
\begin{proof}
We fix a graph $G$ with two parallel links $e_1$ and $e_2$, and we fix the
cost of $e_2$ so that $f_{e_2}(i) = 1$ for all $i \in N$. We consider functions
$f_{e_1}$ that only return costs of $0$ or $2$. Since $f_{e_1}$ is
non-decreasing, this implies that it will be a step function with a single step.
We say that the step is at location $i \in N$ if $f_{e_1}(j) = 0$ for all $j \le
i$, and $f_{e_1}(j) = 2$ for all $j > i$. The precise location of the step will
be decided by an adversary, in response to the queries that are received. 

The adversary's strategy maintains two integers $\ell$ and $u$ with $\ell < u$, and
initially the adversary sets $\ell = 0$ and $u = n$. Intuitively, for all values
below $\ell$ the adversary has fixed $f_{e_1}$ to $0$, and for all values above $u$
the adversary has fixed $f_{e_1}$ to $2$. The range of values between $u$ and
$\ell$ are yet to be fixed, and all values in this range could potentially be the
location of the step. 

Suppose that the adversary receives the query~$\s$. The adversary will respond
with a pair $(c_1, c_2)$, where~$c_1$ is the cost of~$e_1$, and~$c_2$ is the
cost of~$e_2$. The adversary uses the following strategy: 
\begin{itemize}
\item If $n_{e_1}(\s) \le \ell$, then the adversary responds with $(0, 1)$. If
$n_{e_1}(\s) \ge u$, then the adversary responds with $(2, 1)$.
\item If $n_{e_1}(\s) < \frac{u + \ell}{2}$, that is, if $n_{e_1}(\s)$ is closer to
$\ell$ than it is to $u$, then the adversary sets $\ell = n_{e_1}(\s)$, and responds
with $(0, 1)$.
\item If $n_{e_1}(\s) \ge \frac{u + \ell}{2}$, that is, if $n_{e_1}(\s)$ is closer
to $u$ than it is to $\ell$, then the adversary sets $u = n_{e_1}(\s)$, and
responds with $(2, 1)$.
\end{itemize}

Note that, if there exists an $i$ with $\ell < i < u$, then the querier cannot
correctly determine the Nash equilibrium. This is because the step could be at
location $i$, or it could be at location $i-1$. In the former case, the unique
Nash equilibrium assigns $i$ players to $e_1$ and $n-i$ players to $e_2$, and in
the latter case the unique Nash equilibrium assigns $i-1$ players to $e_1$ and
$n-i+1$ players to $e_2$. By construction, the adversary's strategy ensures
that, in response to each query, the gap between $u$ and $\ell$ may decrease by at
most one half. Thus, the querier must make $\log(n)$ queries to correctly
determine the Nash equilibrium.
\end{proof}

Consider a one-player game with $m$ links.
Clearly, we can solve this game with a single over-query, but 
it requires $m$ normal-queries.
Thus we have the following:

\begin{corollary}
\label{cor:LBlogn+m}
If over-queries are not allowed, then $\log(n) + m$
queries are required to determine a pure equilibrium of a symmetric network
congestion game played on parallel links. 
\end{corollary}

\paragraph{Upper bound.} In the rest of the section, we provide an upper bound,
by constructing a payoff query algorithm that finds a pure Nash equilibrium
using $\bigO\left(\log(n)\cdot \frac{\log^2(m)}{\log\log(m)} + m \right)$
normal-queries.
In order to simplify the presentation, we first present an algorithm that makes
use of over-queries; later we show how this can be translated into an algorithm
that uses only normal-queries.

\paragraph{\bf Overview of algorithm with over-queries.} 
Our algorithm is based on an algorithm from \citet{GLMMR08}. Before we present
the full algorithm, we give an overview of the techniques by describing
a simplified version of the algorithm. The basic idea is to group the players
into blocks, where all players in a block must play on the same link. In each
round of the algorithm, we maintain the property that the blocks are in
equilibrium: no block of players can collectively deviate in order to reduce
their latency. Initially, we place all of the players into a single block, and
then in each round of the algorithm, we split each block into smaller blocks,
and compute a new equilibrium for the smaller block size. Eventually, the block
size will be reduced to $1$, and we recover a Nash equilibrium for the
congestion game.

In this simplified overview, we will assume that the number of players $n$ is
equal to $2^i$ for some $i \in \mathbb{N}$, and in each round we will split each
block in half. Our full algorithm will be more complicated, because it
must deal with an arbitrary number of players, and it will split each block into
more than two pieces.

At the start of the algorithm, we place all $n$ players into a single block. In
order to find an equilibrium for this block, we simply have to find the link $i
\in [m]$ that minimizes $f_i(n)$. We can do this with a single over-query $q =
(n, n, \dots, n)$. 

Now suppose that we have found an equilibrium $\s$ for block size $\delta$. We
split each block into two equal-sized pieces, and our task is to transform $\s$
into an equilibrium for block size $\delta/2$ by moving blocks between the
links. The key observation is that no link can receive two or more blocks
of size $\delta/2$, because this would contradict the fact that $\s$ is an
equilibrium for block size $\delta$. So, when we move blocks between the links,
we know that each link can receive at most one block, and therefore each link
can lose at most $m - 1$ blocks. We can make a single over-query in order to
discover the cost of adding one block of $\delta/2$ players to each link: we
simply query $p = (n_{1}(\s) + \delta/2, n_2(\s) + \delta/2, \dots, n_{m}(\s) +
\delta/2)$.  On the other hand, we also need to determine how many blocks each
link loses, and a naive approach would use $m$ queries. We now describe a method
that uses only $\log^2(m)$ under-queries.

Suppose that we guess that $q$, where $0 \le q \le m$, is the number of blocks
that move. We give an algorithm that verifies whether this guess is correct. Let
$c$ be the $(q+1)$th smallest cost returned by the query $p$. For each link $i$,
we determine $q_i$, which is the number of $\delta/2$-sized blocks that would
want to move to a link with cost $c$. This can be done by binary search, in
parallel for all links, using $\log(m)$ many under-queries. There are three
possible outcomes:
\begin{itemize}
\item If $\sum_{i = 1}^m q_i = q$, then our guess was correct, and exactly $q$
blocks move. 
\item If $\sum_{i=1}^m q_i < q$, then our guess was too high, and fewer than $q$
blocks move.
\item If $\sum_{i=1}^m q_i > q$, then our guess was too low, and more than $q$
blocks move.
\end{itemize}
Thus, to determine exactly how many blocks move between the links, we can use a
nested binary search approach: in the outer level we guess how many blocks move,
and in the inner level we use the above method to determine if our guess was too
high or too low.

Therefore, we have a method for constructing an equilibrium with block size
$\delta/2$ from an equilibrium with block size $\delta$ using $\log^2(m)$ many
queries. Since we start with block size $n$, and we halve the block size in
every round, this gives us an algorithm that finds a Nash equilibrium using
$\log(n) \cdot \log^2(m)$ many payoff queries.

In the rest of this section, we formalize this approach, and we deal with the
issues that were ignored in this high level overview. In particular, we present
an algorithm that works for any number of players $n$, and we obtain a slightly
better query complexity by splitting each block into $\log(m)$ many pieces in
each round.

\paragraph{\bf The algorithm with over-queries.}

The algorithm \textsc{ParallelLinks} is depicted in Algorithm \ref{fig_alg1}.
We will show how this algorithm can be implemented with 
$\bigO\left(\log(n)\cdot \frac{\log^2(m)}{\log\log(m)}\right)$ queries. 
The integer $k$ is a parameter to the algorithm that determines the
block size: in each round we consider blocks of size $k^t$ for some $t$. To
deal with the fact that $n$ may not be an exact power of $k$, the algorithm
will maintain a special link $\linka$. This link is defined to be
the link upon which all $n$ players are placed at the start of the algorithm.
Since every subsequent step of the algorithm only moves players in blocks of
size $k^t$ for some $t$, link $\linka$ will be the only like where the number
of players is not a multiple of the block size.

We start by formalising the notion of an equilibrium with respect to a certain
block size. For a congestion game $\Gamma$, an integer $\xx$, and a special link
$\linka$ we define a $\xx$-equilibrium as follows: 
\begin{definition}[$\xx$-equilibrium]
\label{d:deltaequilibrium}
A strategy profile $\s$ is $\xx$-equilibrium if $\xx\vert n_i(\s)$ for all
$i\in[m]\setminus\{a\}$, and for all links $i,j\in[m]$ with $n_i(\s) \ge \xx$ we
have $f_i(n_i(\s))\le f_j(n_j(\s) + \xx)$.
\end{definition}


Intuitively, we can think of a $\xx$-equilibrium $\s$ as a Nash equilibrium in a
transformed game where the players (of the original game) are partitioned into
blocks of size $\xx$ and each block represents a player in the transformed game,
and the remaining $(n \bmod \xx)$ players are fixed to link $\linka$.

\begin{algorithm}
\algblock[Parallel]{Parallel}{EndParallel}
\algblockdefx[PARALLEL]{PARALLEL}{ENDPARALLEL}%
    [1][Unknown]{Parallel~{}#1}%
    {EndParallel}
\begin{algorithmic}[1]
\State $a\gets \arg\min_{i\in[m]} f_i(n)$
\Comment{1 over-query}
\label{l:bestlink}
\State initialize strategy profile $\s$ by putting all players on link $a$ 
\State $T \gets \lfloor\frac{\log(n)}{\log(k)}\rfloor$
\For{$t= T, T-1, \ldots, 1, 0$}
\State $\delta \gets k^t$  \label{l:delta}
\State $\s \gets$ \textsc{RefineProfile($\s,\xx,0,km$)}  
\EndFor
\State
\Return \s
\medskip
\hrule
\medskip
\Function{RefineProfile}{$\s,\xx,\qmin,\qmax$}
\State $q\gets \lfloor\frac{\qmin+\qmax}{2}\rfloor$
\smallskip
\Parallel{ for all links $i\in[m]$} \label{l11}
\State Query for costs $f_i(n_i(\s)+r\xx)$ for all integer $1\le r \le 2k$
\label{l:overq}
\Comment{$2k$ queries}
\EndParallel\label{l13}
\State $Q\gets$ the ordered multiset of $2km$ non-decreasing costs from the above queries
\State $C_{min}(q)\gets$ $(q+1)$-th smallest element of $Q$ \label{l:defcqmin}
\State $p_i\gets$ number of times $i\in[m]$ contributes a cost to the $q$ smallest elements of $Q$
\smallskip
\Parallel{ for all links $i\in[m]$}
\If{$f_i(n_i(\s)-\lfloor\frac{n_i(\s)}{\xx}\rfloor\cdot\xx)> C_{min}(q)$}
\label{l:toofewa}
\Comment{$1$ query; only relevant for link $\linka$}
\State $q_i\gets \lfloor\frac{n_i(\s)}{\xx}\rfloor$
\Else{ (using binary search on $q_i\in[0,\min\{km,\lfloor\frac{n_i(\s)}{\xx}\rfloor\}]$)}
\State $q_i \gets  \min \left\{ q_i
 : 
 f_i(n_i(\s) - q_i\xx) \le C_{min}(q) \right\}
$ \label{l:ubqi}
\Comment{$\log(km)$ queries}
\EndIf
\EndParallel 
\smallskip
\If{$\sum_{i\in[m]} q_i = q$} 
\State modify $\s$ by removing $q_i$ and adding $p_i$ blocks of $\xx$ players to every link $i\in [m]$ \State \Return{$\s$}
\ElsIf{$\sum_{i\in[m]} q_i < q$}
\State \Return{\textsc{RefineProfile}($\s,\xx,\qmin,q-1$)}
\Else{ ($\sum_{i\in[m]} q_i > q$)}
\State \Return{\textsc{RefineProfile}($\s,\xx,q+1,\qmax$)}
\EndIf
\EndFunction
\end{algorithmic}
\caption{\textsc{ParallelLinks}}
\label{fig_alg1}
\end{algorithm}

We start with an informal description of algorithm \textsc{ParallelLinks}. On
Line~\ref{l:bestlink} we initialize the algorithm by using one over-query to
find the cheapest link $\linka$, and assigning all $n$ players to link $\linka$.
Note that $\linka$ is the special link, as discussed earlier.
The algorithm then works in $T+1$ phases, where $T=\lfloor\frac{\log(n)}{\log(k)}\rfloor$. Each phase is one iteration of the \forb-loop. 
The \forb-loop is governed by a variable $t$, which is initially $T$ and decreases by $1$ in each iteration.
Within any iteration, the algorithm uses the function \textsc{RefineProfile} to
transform a $k^{t+1}$-equilibrium into a $k^t$-equilibrium. 

Recall, from the overview, that when $k = 2$, we observed that each link can
receive at most one block when we transform a $2^{t+1}$-equilibrium into a
$2^t$-equilibrium. In the following lemma, we establish a similar property for
the case where $k \ne 2$: each link can receive at most $2k$ blocks.
Intuitively, one might expect each link to receive at most $k$ blocks, but the
extra factor of two here arises due to the special link $\linka$, which was not
considered in our simplified overview.
\begin{lemma}
\label{l:numberofmoves}
We can convert a $k^{t+1}$-equilibrium $\s$ into a $k^t$-equilibrium $\s'$
by moving at most $2k$ blocks of $\xx = k^t$ players to any individual link
and at most $km$ blocks of $\xx$ players in total.
\end{lemma}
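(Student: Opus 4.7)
The plan is to construct $\s'$ from $\s$ via greedy best-response dynamics at the $k^t$-group granularity, and to exploit the $k^{t+1}$-equilibrium property of $\s$ to bound per-link additions (and hence the total number of moves).

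Starting with $\tilde\s\leftarrow\s$, I repeatedly find links $i,j$ with $n_i(\tilde\s)\ge k^t$ and $f_i(n_i(\tilde\s))>f_j(n_j(\tilde\s)+k^t)$, and move one group of $k^t$ players from $i$ to $j$. Each such move strictly decreases the Rosenthal potential, so the process terminates at some $k^t$-equilibrium $\s'$. By canceling pairs of opposite moves on the same link, I may assume without loss of generality that each link is either a pure source (outgoing count $R_\ell$) or a pure destination (incoming count $A_\ell$), and the total number of moves equals $\sum_\ell A_\ell$.

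To bound $A_\ell$ for a destination $\ell$, I look at the last group arriving at $\ell$ from some source $i$: at that moment the triggering condition gives $f_\ell(n_\ell(\s'))<f_i(n_i(\tilde\s))\le f_i(n_i(\s))$, the second step because $i$ is a pure source so $n_i(\tilde\s)\le n_i(\s)$. For a non-$a$ source $i$ with $n_i(\s)\ge k^{t+1}$, the $k^{t+1}$-equilibrium property of $\s$ yields $f_i(n_i(\s))\le f_\ell(n_\ell(\s)+k^{t+1})$. Chaining these with monotonicity of $f_\ell$ gives $n_\ell(\s')<n_\ell(\s)+k^{t+1}$, so at most $k-1$ groups from non-$a$ sources can be added to $\ell$. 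The additional contribution from link $a$ is controlled either by the same inequality when $n_a(\s)\ge k^{t+1}$, or, in the remaining subcase $n_a(\s)<k^{t+1}$, by the hard cap $\lfloor n_a(\s)/k^t\rfloor<k$ on $a$'s total supply across all destinations. Putting these together gives $A_\ell\le 2k$ per link, and summing yields $\sum_\ell A_\ell\le(k-1)m+(k-1)\le km$, which is the total-moves bound.

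The main obstacle I anticipate is the careful handling of the special link $a$: since $n_a(\s)$ is not required to be a multiple of $k^{t+1}$ and may in fact be strictly smaller than $k^{t+1}$, the $k^{t+1}$-equilibrium inequality cannot be directly invoked with $a$ on its left-hand side. Getting the constants right therefore needs a case split on whether $n_a(\s)\ge k^{t+1}$ and, in the small-load subcase, a separate accounting that charges $a$'s contribution to the global budget of $km$ rather than to a single link.
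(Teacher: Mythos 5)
Your high-level plan --- construct $\s'$ via group-level best-response moves, use a (transformed-game) potential to terminate, and use the $k^{t+1}$-equilibrium inequality of $\s$ to bound arrivals per link --- is the same idea that underlies the paper's (very terse) proof, which simply asserts the per-link bound from $f_i(n_i(\s))\le f_j(n_j(\s)+k^{t+1})$ and handles link $a$ by the same case split on $n_a(\s)\lessgtr k^{t+1}$. Your presentation adds helpful detail the paper omits.

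However, there is a genuine gap in the central chaining step. You invoke a move-cancellation WLOG to assume each link is a pure source or pure destination, and then appeal to ``the triggering condition at that moment'' for the last group arriving at $\ell$, concluding $n_i(\tilde\s)\le n_i(\s)$ because ``$i$ is a pure source.'' These two ingredients are incompatible: the triggering condition is a property of the \emph{actual} dynamics trajectory, whereas ``pure source'' is a property of the \emph{cancelled} flow. In the actual dynamics a link $i$ can receive groups early (when its cost is low), have its cost pushed up past a step, and only then start sending; at the moment it sends to $\ell$ one may have $n_i(\tilde\s) > n_i(\s)$, and then $f_i(n_i(\tilde\s))\le f_i(n_i(\s))$ fails, breaking the chain $f_\ell(\cdot)<f_i(n_i(\tilde\s))\le f_i(n_i(\s))\le f_\ell(n_\ell(\s)+k^{t+1})$. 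The cancellation does not rescue this, since after cancellation there is no timeline and no triggering condition to cite. One clean repair is to drop the pure-source claim and instead observe that the \emph{maximum occupied-link cost} is non-increasing along the dynamics: a move from $i$ to $j$ cannot raise the max, because $f_j(n_j+k^t)<f_i(n_i)\le M$ and $f_i$ only decreases. Then for the last arrival at $\ell$ you get $f_\ell(n_\ell(\s'))<f_i(n_i(\tilde\s))\le M(\s)\le f_\ell(n_\ell(\s)+k^{t+1})$ by applying the $k^{t+1}$-equilibrium inequality to whichever occupied non-$a$ link realizes $M(\s)$, and monotonicity of $f_\ell$ gives $n_\ell(\s')<n_\ell(\s)+k^{t+1}$; the $a$-special-casing then proceeds as you describe. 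With that substitution, and a slightly more careful statement of the per-link $2k$ bound (bounding arrivals up to the last non-$a$ arrival, plus the hard cap $\lfloor n_a(\s)/k^t\rfloor<k$ on $a$'s total outflow), your argument goes through.
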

\begin{proof}
Since $\s$ is $k^{t+1}$-equilibrium, we have
$f_i(n_i(\s)) \le  f_j(n_j(\s)+k^{t+1})$ for all $i \in[m]\setminus\{a\},j\in[m]$. Moreover, either (a) $f_a(n_a(\s)) \le  f_j(n_j(\s)+k^{t+1})$ for all $j\in[m]$ or (b) $n_a(\s)<k^{t+1}$.
In case (a), this implies that each link $j\in[m]$ can in total receive at most
$k$ blocks of size $\xx=k^t$ from links $i\in[m]$.
In case (b), this implies that each link $j\in[m]$ can in total receive at most
$k$ blocks of size $\xx=k^t$ from links $i\in[m]\setminus\{a\}$. Moreover, since
$n_a(\s')<k^{t+1}$, we can move at most $k$ blocks of size $\xx=k^t$ from link $a$.
In either case, in total we move at most $km$ blocks.  
All links receive and lose players only in multiples of~$\xx=k^t$, which ensures that $k^t\vert n_i(\s')$ for all $i\in[m]\setminus\{\linka\}$ is maintained.
\end{proof}


\textsc{RefineProfile} determines the number of blocks $q$ which have to be
moved  by binary search on $q$ in $[0,km]$.  Since, by Lemma
\ref{l:numberofmoves}, each link receives at most $2k$ blocks of players, we
spend $2k$ over-queries to determine the cost function values
$f_i(n_i(\s)+r\cdot \xx)$ for all integers~$r\le 2k$ and all links $i\in[m]$.
We define $Q$ as the multi-set of these cost function values and $C_{min}(q)$
as the $(q+1)$-th smallest value in $Q$. Intuitively,  $C_{min}(q)$ is the cost of
the $(q+1)$-th block of players that we would move. We use $C_{min}(q)$ to find
out how many blocks of players $q_i$ we need to remove from each link $i\in[m]$
so that on each link $i\in[m]$ the cost is at most $C_{min}(q)$ or we can't
remove any further blocks as there are less than $\xx$ players assigned to it
(which can only happen on link $\linka$). By Lemma \ref{l:numberofmoves}, we
need to remove at most $km$ blocks of players in total. Therefore, we can
determine $q_i\in[0,\min\{km,\lfloor\frac{n_i(\s)}{\xx}\rfloor\}]$ by binary
search in parallel on all links, with $\bigO(\log(km))$ under-queries. Now, if
$\sum_{i=1}^m q_i = q$, we can construct a $k^t$-equilibrium by removing $q_i$
and adding $p_i$ blocks of $\xx$ players to link $i\in[m]$; note that for every
$i\in[m]$, either $q_i=0$ or $p_i=0$. If $\sum_{i=1}^m q_i \ne q$, our guess
for $q$ was not correct and we have to continue the binary search on $q$.

The algorithm maintains the following invariant:
\begin{lemma}
\label{l:postcondition}
\textsc{RefineProfile}$(\s,\xx,0,km)$ returns a $\xx$-equilibrium.
\end{lemma}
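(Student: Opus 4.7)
I would argue by induction on the outer \forb-loop, assuming on entry that $\s$ is a $k^{t+1}$-equilibrium---so $\xx = k^t$ divides $n_i(\s)$ for every $i\ne a$---and using Lemma~\ref{l:numberofmoves} to fix a legal refinement that moves some $q^*\in[0,km]$ groups of $\xx$ players in total. The proof of Lemma~\ref{l:postcondition} then splits into termination of the recursive binary search, the divisibility condition, and the main $\xx$-equilibrium inequality.

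For termination, note that $Q$ is fixed once lines~\ref{l11}--\ref{l13} have been executed, so $C_{min}(q)$ is non-decreasing in $q$; the defining threshold $f_i(n_i(\s)-q_i\xx)\le C_{min}(q)$ of $q_i$ then only becomes easier as $q$ grows, making $g(q):=\sum_i q_i(q)$ non-increasing. Lemma~\ref{l:numberofmoves} guarantees $g(q^*)=q^*$ for a specific $q^*\in[0,km]$, and the binary search locates such a $q$. Divisibility is then immediate from $n_i(\s')=n_i(\s)+(p_i-q_i)\xx$ together with $\xx\mid n_i(\s)$ for $i\ne a$.

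The heart of the proof is to show $f_i(n_i(\s'))\le C_{min}(q)\le f_j(n_j(\s')+\xx)$ whenever $n_i(\s')\ge\xx$, by case analysis on $(p_i,q_i)$ and $(p_j,q_j)$. For the upper bound, if $p_i>0$ then $f_i(n_i(\s)+p_i\xx)$ is one of the $p_i$ link-$i$ values selected among the $q$ smallest elements of $Q$; if $q_i>0$ then $f_i(n_i(\s)-q_i\xx)\le C_{min}(q)$ by the defining minimality of $q_i$; and if $p_i=q_i=0$ the same minimality at zero yields $f_i(n_i(\s))\le C_{min}(q)$. Symmetrically, for the lower bound, if $p_j>0$ the $(p_j+1)$-th link-$j$ marginal is not among the $q$ cheapest, hence $\ge C_{min}(q)$; if $q_j>0$ minimality forces $f_j(n_j(\s)-(q_j-1)\xx)>C_{min}(q)$, i.e.\ $f_j(n_j(\s')+\xx)>C_{min}(q)$; and if $p_j=q_j=0$ then $f_j(n_j(\s)+\xx)\in Q$ is not among the $q$ smallest. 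Chaining the two inequalities finishes this part.

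The main obstacle I anticipate is the careful bookkeeping around the special link $a$, whose load need not be divisible by $\xx$: the branch on line~\ref{l:toofewa} sets $q_a=\lfloor n_a(\s)/\xx\rfloor$ without enforcing the $C_{min}(q)$ threshold, and one must verify that the upper-bound case analysis still goes through for $i=a$, using that $n_a(\s')<\xx$ unless $p_a\ge 1$ (in which case the $p_i>0$ sub-case already applies). A second delicate point is the extremal case $p_j=2k$ permitted by Lemma~\ref{l:numberofmoves}, in which $f_j(n_j(\s')+\xx)$ was never queried directly; there one must appeal to the monotonicity of $f_j$ together with the $k^{t+1}$-equilibrium property of the incoming $\s$ to recover the lower bound.
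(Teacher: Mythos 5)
Your plan mirrors the paper's proof: both arguments fix the returning call of \textsc{RefineProfile}, establish the chained inequality $f_i(n_i(\s')) \le C_{min}(q) \le f_j(n_j(\s')+\xx)$ for the modified profile $\s'$, and note that divisibility is preserved because only multiples of $\xx$ are moved (with the special-case handling of link $a$ via line~\ref{l:toofewa}). The paper's own proof is considerably terser: it simply attributes the left inequality to line~\ref{l:ubqi} and the right inequality to the definition of $C_{min}(q)$ at line~\ref{l:defcqmin}, without the $(p_i,q_i)$ case analysis you give, and without discussing termination of the recursive binary search at all. Your more explicit case split is a genuine improvement in rigor --- in particular, when $p_i>0$ (so $q_i=0$) the left inequality really comes from $f_i(n_i(\s)+p_i\xx)$ being among the $q$ smallest elements of $Q$, not from line~\ref{l:ubqi} as the paper's wording suggests, and your treatment makes that clear. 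Two small caveats on your sketch: first, the assertion that Lemma~\ref{l:numberofmoves} ``guarantees $g(q^*)=q^*$'' is a bit quick, since $q_i$ is defined by thresholding at $C_{min}(q)$, and one still has to argue that the $q^*$ from the existence of the refinement is a genuine fixed point of $g$ (the monotonicity of $g$ and the existence of a $\le km$-move refinement do make the binary search sound, but this step deserves a sentence). Second, your flag on the $p_j=2k$ edge case is a real subtlety the paper silently skips --- there $f_j(n_j(\s')+\xx)$ corresponds to the unqueried $(2k{+}1)$-th marginal --- and your proposal to recover it from monotonicity plus the incoming $k^{t+1}$-equilibrium (showing this configuration cannot in fact arise at the returning $q$) is the right direction, though it would need to be carried out.
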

\begin{proof}
Observe that $\delta=k^t$. 
In the first iteration of the \forb-loop $t=T$ and  \textsc{RefineProfile}$(\s,\xx,0,km)$ gets a $n$-equilibrium as input, which is also a $k^{T+1}$-equilibrium as all players are assigned to link 
$\linka$ and $k^{T+1}>n$. So to prove the claim, it suffices to show that \textsc{RefineProfile}$(\s,k^t,0,km)$ returns a $k^t$-equilibrium if $\s$ is a $k^{t+1}$-equilibrium.
For the $\s$ returned by \textsc{RefineProfile} and the $q$ in its returning
call, we have $f_i(n_i(\s)) \le C_{min}(q) \le  f_i(n_i(\s) + \xx)$ for all
$i\in[m]\setminus\{\linka\}$.
The left inequality follows from line~\ref{l:ubqi} of the algorithm.
The right
inequality follows from the definition of $C_{min}(q)$ as the $(q+1)$-th
smallest element in $Q$ in line~\ref{l:defcqmin} of the algorithm.
For link~$\linka$, we have 
$f_\linka(n_\linka(\s)) \le C_{min}(q) \le  f_\linka(n_\linka(\s) + \xx)$ 
or we have
$f_\linka(n_\linka(\s)) > C_{min}(q)$ and $n_\linka(\s) < \xx$, where the
first case follows from lines~\ref{l:ubqi} and \ref{l:defcqmin} as before, and the second case
corresponds to line~\ref{l:toofewa}.
Noting that \textsc{RefineProfile} maintains that for the returned
$s$ we have $\xx \vert n_i(\s)$ for all $i\in[m]\setminus\{\linka\}$, as it
only moves blocks of size $\xx$, the claim follows.
\end{proof}
%
%



We now give the payoff query complexity of \textsc{RefineProfile}. We split our
analysis into over-queries and non-over-queries (ie.\ under-queries or
normal-queries), because we will later show how the over-queries made by our
algorithm can be translated into a sequence of non-over-queries.

\begin{lemma}
\label{l:rp_queries}
\textsc{RefineProfile($\s,\xx,0,km$)} can be implemented to make $2k$
over-queries and $\bigO(\log^2(km))$ non-over-queries. 
\end{lemma}
\begin{proof}
Note that, as long as $\delta$ is not changed, the queries made on
line~\ref{l:overq} are the same for each pair of $\qmin$ and $\qmax$. Therefore,
we can perform these $2k$ over-queries when
we first call \textsc{RefineProfile($\s,\xx,0,km$)}, and reuse these values during each
recursive call. For each value of $q$ in the binary search, we make
$\bigO(\log(km))$ under-queries to determine the $q_i$'s in parallel for all
links $i\in[m]$.  The binary search on $q$ adds a factor $\log(km)$ to give
$\bigO(\log^2(km))$ under-queries in total.
\end{proof}

Using Lemmas~\ref{l:postcondition} and~\ref{l:rp_queries} we can prove the
following.

\begin{theorem}\label{thm:UBparallel}
Algorithm \textsc{ParallelLinks} returns a pure Nash equilibrium and can be
implemented with $\bigO\left(\log(n)\cdot \frac{\log^2(m)}{\log\log(m)}\right)$
queries, of which $2k \cdot \frac{\log n}{\log \log m}$ are over-queries.
\end{theorem}
\begin{proof}
In the last iteration of the \forb-loop, we have $\xx=1$,
so Lemma~\ref{l:postcondition} implies that~$\s$ is a pure Nash equilibrium.
To find the best link in line \ref{l:bestlink} of the algorithm, we need one
over-query.
For any $k\ge 2$, the algorithm does
$T+1=\bigO\left(\frac{\log(n)}{\log(k)}\right)$ iterations of the \forb-loop. In
each iteration we do $\bigO(\log^2(km))$ under-queries and $2k$ over-queries.
Choosing $k=\Theta(\log(m))$ yields the stated upper bound.
\end{proof}

\paragraph{\bf Using only normal-queries.} We now show how
Algorithm~\ref{fig_alg1} can be implemented without the use of over-queries.
Before doing so, we remark that in the parallel links setting, we can also
avoid using under-queries.

\begin{lemma}
If a parallel links congestion game has at least two links, then every
under-query can be translated into two normal-queries.
\end{lemma}
\begin{proof}
Suppose that the game has $m \ge 2$ links, and let $q = (i_1, i_2, \dots, i_m)$
be an under-query. Let $n' = \sum_{j=1}^m i_j$ be the total number of
players used by $q$. We define the following queries:
\begin{align*}
q_1 &= (i_1 + n - n', i_2, \dots, i_m), \\
q_2 &= (i_1, i_2, \dots, i_m + n - n'). 
\end{align*}
Clearly both $q_1$ and $q_2$ are normal-queries. Query $q_1$ tells us the cost
of links $2$ through $m$ under $q$, and query $q_2$ tells us the cost of link
$1$ under $q$. 
\end{proof}

We now turn our attention to over-queries. The following lemma gives a general
method for translating over-queries into non-over-queries.

\begin{lemma}
\label{lem:overq}
Suppose we have a parallel links game with $m$ links and $n$ players. Let $q =
(i_1, i_2, \dots, i_m)$ be an over-query, and define $n' = \sum_{j=1}^m i_j$. We
can translate $q$ into a sequence of $\bigO(n'/n)$ non-over-queries.
\end{lemma}
\begin{proof}
Consider the following greedy algorithm: find the smallest index $b$ such that
$\sum_{1 \le k \le b} i_k \le n$ and assign links $1$ through $b$ to
query $q_1$. Set $i_1 = i_2 = \dots = i_b = 0$, and repeat. Clearly each query
that we generate during this algorithm is a non-over-query.

Let $q_1, q_2, \dots, q_l$ be the sequence of non-over-queries generated by the
above algorithm for some $l\in\mathbb{N}$. For each $j$, let $n_j$ be the total
number of players used by $q_j$, and observe that $\sum_{1 \le j \le l} n_j = n'$. 
Furthermore, for each
$j$, let $r_j = n - n_j$ be the total number of players \emph{not} used by
$q_j$. Due to the nature of our algorithm, for every $j > 1$ we must have
$r_{j-1} <
n_{j}$, since the first link assigned to $q_j$ would not fit in $q_{j-1}$.
Thus, we have:
\begin{align*}
\sum_{1 \le j \le l} r_{j} &< \sum_{2 \le j \le l} n_j + r_l \\
&< n' + n.
\end{align*}
Since the total number of queries in the sequence is $l$, we can argue that:
\begin{align*}
l &= \frac{1}{n} \sum_{1 \le j \le l} (n_j + r_j) \\
&< \frac{n' + n' + n}{n} \\
&= 1 + \frac{2n'}{n}.
\end{align*}
Thus, our greedy algorithm generates at most $\bigO(n'/n)$ non-over-queries. 
\end{proof}

In order to optimise the number of non-over queries we have to adjust Algorithm \ref{fig_alg1} 
slightly, because with $k=\Theta(\log(m))$ in early iterations of the for loop, i.e., when $T$ is large, the number of players
used in the over queries in line (\ref{l:overq}) is large and applying Lemma
\ref{lem:overq} would 
yield to a total of $\bigO\left(\log(n)\cdot \frac{\log^2(m)}{\log\log(m)}+m\log(m)\right)$ non-over 
queries. 
In contrast, we will now show that our adjusted Algorithm \ref{fig_alg1b} can be implemented 
to do at most $\bigO\left(\log(n)\cdot \frac{\log^2(m)}{\log\log(m)}+m\right)$ non-over 
queries. The main idea is to divide the block size by $2$ until the number of
players in a block is small enough and then switch to $k=\Theta(\log(m))$.

\begin{algorithm}
\algblock[Parallel]{Parallel}{EndParallel}
\algblockdefx[PARALLEL]{PARALLEL}{ENDPARALLEL}%
    [1][Unknown]{Parallel~{}#1}%
    {EndParallel}
\begin{algorithmic}[1]
\State $a\gets \arg\min_{i\in[m]} f_i(n)$
\Comment{1 over-query}
\label{l:bestlinkb}
\State initialize strategy profile $\s$ by putting all players on link $a$ 
\State $T \gets \left\lfloor\frac{\log(n/m)}{\log(k)}\right\rfloor$
\State $T_0\gets$ largest $t$ such that $k^T 2^t<n$
\For{$t= T_0, T_0-1, \ldots, 1$}
\State $\delta \gets k^T 2^t$  
\State $\s \gets$ \textsc{RefineProfile($\s,\xx,0,2m$)}  
\EndFor
\For{$t= T, T-1, \ldots, 1, 0$}
\State $\delta \gets k^t$  
\State $\s \gets$ \textsc{RefineProfile($\s,\xx,0,km$)}  
\EndFor
\State
\Return \s
\end{algorithmic}
\caption{\textsc{ParallelLinks avoiding over-queries}}
\label{fig_alg1b}
\end{algorithm}

To initialize our algorithm, we make an over-query that uses $m \cdot n$
players. By Lemma~\ref{lem:overq}, we can translate this into $\bigO(m)$ 
non-over-queries. 

In each iteration of the first \forb-loop with value $t$, by Lemma \ref{l:rp_queries}, we make $\bigO(1)$ over-queries. 
Each of these uses at most $n+m\cdot 4 \cdot k^T2^t$ players. By Lemma~\ref{lem:overq}, these can be simulated by 
$\bigO(1+\frac{m  k^T2^t}{n})$ non-over-queries. Summing up over all iterations and using the definition of $T_0$, we can argue that  all over-queries of the first \forb-loop can be 
simulated by 
\begin{align*}
\sum_{t=1}^{T_0} \bigO\left(1+\frac{m  k^T2^t}{n}\right) &= \bigO(T_0)+\bigO\left(\frac{m  k^T2^{T_0}}{n} \right)  = \bigO(m)
\end{align*}
non-over-queries.

In each iteration of the second \forb-loop with value $t$, by Lemma \ref{l:rp_queries}, we make make $2k$
over-queries that each use at most $n + m \cdot 2k \cdot k^t$ players. 
By Lemma~\ref{lem:overq}, these can be simulated by 
$\bigO(\frac{mk^{t+1}}{n})$ non-over-queries. Summing up over all iterations, we can argue that all over-queries of the second \forb-loop can be 
simulated by 
\begin{align*}
\sum_{t=0}^{\lfloor\frac{\log(n/m)}{\log(k)}\rfloor} \bigO\left(\frac{m k^{t+1}}{n}\right) 
&= \bigO\left(\frac{m}{n} \cdot  k^{\frac{\log(n/m)}{\log(k)}+1}\right)  \\
&= \bigO\left(\frac{m}{n} \cdot  k^{\frac{\log(n)-\log(m)+\log(k)}{\log(k)}}\right)  \\
&= \bigO\left(\frac{m}{n} \cdot  k^{\frac{\log(n)}{\log(k)}}\right)  \\
&= \bigO(m)
\end{align*}
non-over-queries.

Combining this discussion with Theorem~\ref{thm:UBparallel}, we get the following result:
\begin{theorem}
\label{thm:UBparallel_n}
Algorithm \ref{fig_alg1b} returns a pure Nash equilibrium and can be
implemented with $\bigO\left(\log(n)\cdot \frac{\log^2(m)}{\log\log(m)}+m\right)$
queries.
\end{theorem}

The upper bound in Theorem~\ref{thm:UBparallel_n} should be contrasted with the
lower bound of $\log(n) + m$ (Corollary~\ref{cor:LBlogn+m}).


\subsection{Symmetric Network Congestion Games on Directed Acyclic Graphs}
\label{sec:dags}


In this section, we consider symmetric network congestion games 
on directed acyclic graphs. Throughout this section, we consider the game
$\Gamma=(\N, V, E, (f_e)_{e\in E}, o, d)$, where $(V, E)$ is a directed acyclic
graph (DAG). We use
the $\prec$ relation to denote a topological ordering over the vertices in~$V$.
We assume that, for every vertex $v \in V$, there exists a path
from $o$ to~$v$, and there exists a path from $v$ to $d$. If either of these
conditions does not hold for some vertex~$v$, then $v$ cannot appear on an
$o$-$d$ path, and so it is safe to delete~$v$.

We provide an algorithm that discovers a cost function for each edge. One
immediate observation is that we can never hope to find the actual cost
functions. Consider the following one-player congestion game.
\begin{center}
\resizebox{0.35\textwidth}{!}{
\begin{tikzpicture}[node distance=3cm]
\node[state] (o) {$o$};
\node[state] [right of=o] (m) {};
\node[state] [right of=m] (d) {$d$};
\path[->]
	(o) [bend right] edge node[auto,swap] {$b$} (m)
	(o) [bend left] edge node[auto] {$a$} (m)
	(m) [bend right] edge node[auto,swap] {$d$} (d)
	(m) [bend left] edge node[auto] {$c$} (d)
	;
\end{tikzpicture}
} 
\end{center}
If we set $f_a(1) = f_b(1) = 1$ and $f_c(1) = f_d(1) = 0$, then all $o$-$d$
paths have cost~$1$. However, we could also achieve the same property by
setting $f_a(1) = f_b(1) = 0$ and setting $f_c(1) = f_d(1) = 1$. Thus, it is
impossible to learn the actual cost functions using payoff queries.

To deal with this issue, we introduce the notion of an \emph{equivalent} cost
function: two cost functions are said to be equivalent if they assign the same
cost to every strategy profile. We show that, while it is impossible to find
the actual cost function via payoff queries, we can use payoff queries to find
an equivalent cost function.

Our algorithm proceeds inductively over the number of players in the game.
For the base case, we give an algorithm that finds an equivalent cost function
$f'$ such that $f'_e(1)$ is defined for every edge $e$. This corresponds to
learning all the costs in a one-player congestion game played on $\Gamma$. Then,
for the inductive step, we show how the costs for an $i$-player game can be used
to find the costs in an $i+1$ player game. That is, we use the known values
of $f'_e(j)$ for $j \le i$ to find the cost of $f'_e(i+1)$ for every edge $e$.
Therefore, at the end of the algorithm, we have an equivalent cost function
$f'$ for an $n$-player game on $\Gamma$, and we can then apply a standard
congestion game algorithm \citep{FPT04} in order to solve our game.

Unlike our work on parallel links, in this section we will not use over-queries
at all.
In each inductive step, when we are considering an $i$-player congestion game,
we will make queries that use exactly $i$ players. Thus, in the first $n-1$
rounds we will use under-queries, and in the final round we will use
normal-queries. For the sake of brevity, in this section we will use the word
``query'' to refer to both normal and under-queries.

As a shorthand for defining queries, we use notation of the form $\s \leftarrow
(1 \mapsto p, 3 \mapsto q)$. This example defines $\s$ to be a four-player query
that assigns 1 player to $p$ and 3 players to $q$, where $p$ and $q$ are paths
from the origin to the destination in a symmetric network congestion game. We
use $\query(\s)$ to denote the outcome of querying $\s$. It returns a
function~$c_\s$, which gives the cost of each strategy when $\s$ is played.

\paragraph{Preprocessing.} Our algorithm requires a preprocessing step. We say
that edges $e$ and~$e'$ are \emph{dependent} if visiting one implies that we
must visit the other. More formally, $e$ and $e'$ are dependent if, for every
$o$-$d$ path $p$, we either have $e, e' \in p$, or we have $e, e' \notin p$. We
preprocess the game to ensure that there are no pairs of dependent edges. To do
this, we check every pair of edges $e$ and $e'$, and test whether they are
dependent. If they are, then we \emph{contract} $e'$, i.e., if $e' = (v, u)$,
then we delete $e'$, and set $v = u$. The following lemma shows that this
preprocessing is valid, and therefore, from now on, we can assume that our
congestion game contains no pair of dependent edges.

\begin{lemma}
\label{LEM:PREPROCESS}
There is an algorithm that, given a congestion game $\Gamma$, where $(V, E)$ is
a DAG, produces a game $\Gamma'$ with no pair of dependent edges, such that
every Nash equilibrium of~$\Gamma'$ can be converted 
to a Nash equilibrium of
$\Gamma$. The algorithm and conversion 
of equilibria 
take polynomial time and make zero payoff queries.
Moreover, payoff queries to $\Gamma'$ can be trivially simulated with payoff
queries to $\Gamma$.
\end{lemma}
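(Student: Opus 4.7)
The plan is to produce $\Gamma'$ by iteratively contracting dependent edges while redistributing latencies so that costs are preserved under a natural bijection of strategies. Dependency is detected using only the graph structure: $e,e'$ are dependent iff $e$ cannot lie on an $o$-$d$ path in $(V,E)\setminus\{e'\}$ and $e'$ cannot lie on an $o$-$d$ path in $(V,E)\setminus\{e\}$. Each test is a pair of reachability checks, so this phase runs in polynomial time and makes zero payoff queries.

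Suppose $e=(a,b)$ and $e'=(v,u)$ are dependent. I would form $\Gamma'$ by deleting $e'$, identifying $v$ with $u$ into a single vertex $\tilde v$, and setting $\tilde f_e := f_e + f_{e'}$ while leaving every other latency unchanged. The first key claim is structural: dependency together with the DAG property implies that $e'$ is the only $v$-to-$u$ path in $(V,E)$. For any alternate $v$-to-$u$ path $Q$ would splice into some $e'$-using $o$-$d$ path $P_1 = P_0\cdot e'\cdot P_2$ to yield an $o$-$d$ path $P_0\cdot Q\cdot P_2$ that avoids $e'$; since the unique occurrence of $e$ in $P_1$ must lie in either $P_0$ or $P_2$, this spliced path would then contain $e$ while avoiding $e'$, contradicting dependency. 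This shows that contraction is clean: $\Gamma'$ is still a DAG with source $o$ and sink $d$, and any path through $\tilde v$ in $\Gamma'$ that crosses from an edge originally into $v$ to an edge originally out of $u$ corresponds uniquely to an original path that traversed $e'$.

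I would then exhibit the bijection $\pi$ from $o$-$d$ paths of $\Gamma$ to $o$-$d$ paths of $\Gamma'$ that simply deletes the (at most one) occurrence of $e'$. Injectivity is immediate, and surjectivity uses the structural claim to lift a ``crossing'' path of $\Gamma'$ back through $e'$, while ``non-crossing'' paths lift verbatim. Dependency ensures $e\in\pi(P) \iff e'\in P$, so for every strategy profile $\s$ of $\Gamma'$ the lifted profile $\hat\s$ of $\Gamma$ satisfies $n_e(\hat\s)=n_{e'}(\hat\s)=n_e(\s)$ and $n_g(\hat\s)=n_g(\s)$ on every other edge. Combined with $\tilde f_e = f_e + f_{e'}$, this gives equal per-player costs in $\Gamma'$ and $\Gamma$, so pure Nash equilibria of $\Gamma'$ pull back under $\pi^{-1}$, the conversion takes polynomial time, and a payoff query to $\Gamma'$ on assignment $\s$ is simulated by a single payoff query to $\Gamma$ on $\hat\s$, with $\tilde f_e$ reported as $f_e + f_{e'}$. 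Iterating at most $|E|$ times removes every dependent pair.

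The main obstacle I anticipate is the structural claim that no alternate $v$-to-$u$ path exists; without it, identifying $v$ and $u$ could create a directed cycle (destroying the DAG property) or introduce spurious $o$-$d$ paths that do not correspond to any strategy of $\Gamma$. Once that claim is in hand, the bijection, the cost identity, and the simulation of payoff queries follow routinely.
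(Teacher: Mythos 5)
Your proposal takes essentially the same route as the paper's proof: detect dependent pairs via graph reachability tests (zero payoff queries), contract $e'$, and fold $f_{e'}$ into $f_e$ so that path costs are preserved. The one genuine addition is that you state and prove the structural claim that dependency in a DAG forces $e'$ to be the unique $v$-to-$u$ path, which is exactly what justifies that contraction neither creates cycles nor introduces spurious $o$-$d$ paths; the paper's proof asserts the path-cost identity and the equilibrium translation directly without addressing this, so your version is slightly more rigorous but follows the same approach.
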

\begin{proof}
Our algorithm will check, for each pair of edges $e = (v, u)$ and $e' = (v',
u')$, whether~$e$ and $e'$ are dependent. This is done in the following way.
Note that if $v = v'$, then $e$ and $e'$ cannot possibly be dependent. Thus, we
can assume without loss of generality that $v \prec v'$. The algorithm performs
two checks:
\begin{itemize}
\item Delete $e$ and verify that there is no path from $o$ to $v'$.
\item Delete $e'$ and verify that there is no path from $u$ to $d$.
\end{itemize}
The first check ensures that every path that uses~$e'$ must also use $e$. The
second check ensures that every path that uses~$e$ must also use~$e'$. Thus, if
both checks are satisfied, then~$e$ and~$e'$ are dependent. On the other hand,
if one of the checks is not satisfied, then we can construct an $o$-$d$ path
that uses~$e$ and not~$e'$, or a path that uses~$e'$ and not~$e$, which verifies
that~$e$ and~$e'$ are not dependent. 

Whenever the algorithm finds a pair of edges $e, e' \in E$ that are dependent,
it contracts~$e'$. More formally, if $e' = (v, u)$, then the algorithm
constructs a new congestion game $\Gamma' =(\N, V', E', (f'_e)_{e\in E'}, o, d)$
 where $V' = V \setminus \{u\}$, and $E'$ contains:
\begin{itemize}
\item every edge $(w, x) \in E$ with and $w \ne u$, and
\item an edge $(v, x)$ for every edge $(u, x) \in E$.
\end{itemize}
Note that $E'$ does not contain $e'$.
Moreover, we define the cost functions $f'$ as follows. For each edge $e'' \ne
e$, we set $f'_{e''}(i) = f_{e''}(i)$ for all $i$. For the edge $e$, we define
$f'_e(i) = f_e(i) + f_{e'}(i)$ for all $i$.

We argue that this operation is correct. Since $e$ and $e'$ are dependent, we
have that, for every strategy profile $\s$, and for every $o$-$d$ path $p$:
\begin{equation*}
\sum_{e'' \in p} f'_{e''}(i) = \sum_{e'' \in p} f_{e''}(i).
\end{equation*}
Therefore, we can easily translate every Nash equilibrium of $\Gamma'$ into a
Nash equilibrium for~$\Gamma$.  Moreover, every payoff query for $\Gamma'$ can
be translated into a payoff query for $\Gamma$ by adding the edge $e'$ where
appropriate.

Thus, the algorithm constructs a sequence of games $\Gamma_1$, $\Gamma_2$,
$\dots$, where each game $\Gamma_{i+1}$ is obtained by contracting an edge in
$\Gamma_{i}$. Moreover, the Nash equilibria for $\Gamma_{i+1}$ can be translated
to $\Gamma_{i}$, which implies that the algorithm is correct. This algorithm can
obviously be implemented in polynomial time. Moreover, since the algorithm only
inspects structural properties of the graph, it does not make any payoff
queries. 
\end{proof}

\paragraph{Equivalent cost functions.}


As we have mentioned, we cannot hope to find the actual cost function of
$\Gamma$ using payoff queries. To deal with this, we introduce the following
notion of equivalence.
\begin{definition}[Equivalence]
Two cost functions $f$ and $f'$ are equivalent if for every strategy profile $\s
= (s_1, s_2, \dots, s_n)$, we have $\sum_{e \in s_i} f_e(n_e(\s)) = \sum_{e
\in s_i} f'_e(n_i(\s))$, for all $i$.
\end{definition}
Clearly, the Nash equilibria of a game cannot change if we replace its cost
function~$f$ with an equivalent cost function~$f'$.

We say that $(f'_e)_{e\in E}$ is a \emph{partial} cost function if for some
$e \in E$ and some $i \le n$, $f'_e(i)$ is undefined. We say that $f''$ is an
\emph{extension} of $f'$ if $f''$ is a partial cost function, and if $f''_e(i) =
f'_e(i)$ for every $e \in E$ and $i \le n$ for which $f'_e(i)$ is defined. We
say that $f''$ is a \emph{total extension} of $f'$ if $f''$ is an extension of
$f'$, and if $f''_e(i)$ is defined for all $e \in E$ and all $i \le n$.
\begin{definition}[Partial equivalent cost function]
Let~$f$ be a cost function. We say that~$f'$ is a partial equivalent of~$f$
if~$f'$ is a partial cost function, and if there exists a total extension~$f''$
of~$f'$ such that $f''$ is equivalent to~$f$.
\end{definition}

Our goal is to find a total equivalent cost function by learning the costs one
edge at a time. Thus, our algorithm will begin with a partial cost function
$f^0$ such that $f^0_e(i)$ is undefined for all $e \in E$ and all $i \le n$.
Since it is undefined everywhere, it is obvious that $f^0$ is a partial
equivalent of $f$. At every step of the algorithm, we will take a partial
equivalent cost function $f'$ of $f$, and produce an extension $f''$ of $f'$,
such that $f''$ is still a partial equivalent of $f$. This guarantees that, when
the algorithm terminates, the final cost function is equivalent to $f$.

\subsection{The One-Player Case}\label{sec:dags2}

\paragraph{Outline.}

For the one player case, our algorithm is relatively straightforward. The
algorithm proceeds iteratively by processing the vertices according to their
topological order, starting from the origin vertex $o$, and moving towards the
destination vertex $d$. Each time we process a vertex $k$, we determine the cost
of every incoming edge $(u, k)$. There are two different cases: the case where
$k \ne d$, and the case where $k = d$. For the latter case, we will observe
that, once we know the cost of every edge other than the incoming edges to $d$,
we can easily find the cost of the incoming edges to $d$. 

The former case is slightly more complicated. When we consider a vertex
$k \ne d$, it turns out that we cannot find the actual costs for the incoming
edges at $k$. Instead, we can use payoff queries to discover the difference in
cost between each pair of incoming edges, and therefore, we can find the
cheapest incoming edge $e$ to $k$. We proceed by fixing the cost of $e$ to be
$0$. Once we have done this, we can then set the cost of each other incoming
edge $e'$ according to the difference between the cost of $e$ and the cost of
$e'$, which we have already discovered. We prove that this approach is correct
by showing that it yields a partial equivalent cost function.

We now formally describe our algorithm. The algorithm begins with the partial
cost function $f^0$. The algorithm processes vertices iteratively according to
the topological ordering $\prec$. Suppose that we are in iteration $a+1$ of the
algorithm, and that we are processing a vertex $k \in V$. We have a partial
equivalent cost function $f^{a}$ such that $f^a_e(1)$ is defined for every edge
$e = (v, u)$ with $u \prec k$, for some vertex~$k$. We then produce a partial
equivalent cost function $f^{a+1}$ such that $f^{a+1}_e(1)$ is defined for every
edge $e = (v, u)$ with $u \preceq k$. We now consider the two cases.

\paragraph{The $k \ne d$ case.}

\renewcommand{\algorithmicrequire}{\textbf{Input:}}
\renewcommand{\algorithmicensure}{\textbf{Output:}}

\begin{algorithm}
\begin{algorithmic}[1]
\Require A partial equivalent cost function $f^a$, such that $f^{a}_e(1)$ is defined for all edges $(v, u)$ with $u \prec k$.
\Ensure A partial equivalent cost function $f^{a+1}$, such that $f^{a+1}_e(1)$ is defined for all edges $(v, u)$ with $u \preceq k$.

\ForAll {$e$ for which $f^{a}_e(1)$ is defined}
\label{l:copystart}
\State $f^{a+1}_e(1) \leftarrow f^{a}_e(1)$
\EndFor
\label{l:copyend}

\State $p \leftarrow $ an arbitrary $k$-$d$ path

\ForAll {$e = (v, k) \in E$}
\label{l:tailstart}
\State $p' \leftarrow$ an arbitrary $o$-$v$ path
\State $\s \leftarrow (1 \mapsto p'ep)$
\State $c_\s \leftarrow \query(\s)$
\State $t(ep) \leftarrow c_{\s}(p'ep) - \sum_{e' \in p'} f^{a}_{e'}(1)$
\label{l:tep}
\EndFor
\label{l:tailend}
\State $e' \leftarrow $ edge $e = (v, k)$ that minimises $t(ep)$
\label{l:linem}
\State $f^{a+1}_{e'}(1) \leftarrow 0$
\label{l:zero}
\ForAll {$e = (v, k) \in E$ with $e \ne e'$}
\label{l:relstart}
\State $f^{a+1}_e(1) \leftarrow t(ep) - t(e'p)$
\label{l:diff}
\EndFor
\label{l:relend}
\end{algorithmic}
\caption{\textsc{ProcessK}}
\label{alg:k}
\end{algorithm}

We use the procedure shown in Algorithm~\ref{alg:k} to process $k$.
Lines~\ref{l:copystart} through~\ref{l:copyend} simply copy the old cost
function $f^a$ into the new cost function $f^{a+1}$. This ensures that $f^{a+1}$
is an extension of $f^a$. The algorithm then picks an arbitrary $k$-$d$
path~$p$. The loop on lines~\ref{l:tailstart} through~\ref{l:tailend} compute
the function $t$, which for each incoming edge $e = (v, k)$, gives the cost
$t(ep)$ of allocating one player to $ep$. Note, in particular, that the value of
the expression $\sum_{e' \in p'} f^a_{e'}(1)$ is known to the algorithm, because
every vertex visited by $p'$ has already been processed. The algorithm then
selects~$e'$ to be the edge that minimises~$t$, and sets the cost of $e'$ to be
$0$. Once it has done this, lines~\ref{l:relstart} through~\ref{l:relend}
compute the costs of the other edges relative to $e'$. 

When we set the cost of~$e'$ to be~$0$, we are making use of equivalence.
Suppose that the actual cost of~$e'$ is $c_{e'}$. Setting the cost of~$e'$ to
be~$0$ has the following effects:
\begin{itemize}
\item Every incoming edge at~$k$ has its cost reduced by~$c_{e'}$. 
\item Every outgoing edge at~$k$ has its cost increased by~$c_{e'}$.
\end{itemize}
This maintains equivalence with the original cost function, because for every
path $p$ that passes through $k$, the total cost of $p$ remains unchanged. The
following lemma formalises this and proves that $f^{a+1}$
is indeed a partial equivalent cost function.

\begin{lemma}
\label{LEM:ALGK}
Let $k \ne d$ be a vertex, and let $f^a$ be a partial equivalent cost function
such that $f^a_e(1)$ is defined for all edges $e = (v, u)$ with $u \prec k$.
When given these inputs, Algorithm~\ref{alg:k} computes a partial equivalent
cost function $f^{a+1}$ such that $f^{a+1}_e(1)$ is defined for all edges $e =
(v, u)$ with $u \preceq k$.
\end{lemma}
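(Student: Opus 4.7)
The plan is to verify three things about $f^{a+1}$: (i) its domain is correct, (ii) it extends $f^a$, and (iii) it is a partial equivalent of~$f$. Parts (i) and (ii) follow by inspection of Algorithm~\ref{alg:k}: the loop on lines~\ref{l:copystart}--\ref{l:copyend} copies $f^a$ into $f^{a+1}$, and lines~\ref{l:tailstart}--\ref{l:relend} define $f^{a+1}_e(1)$ for every incoming edge $e = (v, k)$. The substantive content is~(iii).

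To establish~(iii), the plan is to lift the hypothesis. Since $f^a$ is a partial equivalent of~$f$, fix a total extension $g$ of $f^a$ that is equivalent to~$f$, and set $c_m = g_m(1)$. I would then construct a new total cost function $g'$ by applying a \emph{constant shift at every load} to edges incident to~$k$: for every incoming edge $e = (v, k)$ and every $i \ge 1$, set $g'_e(i) = g_e(i) - c_m$; for every outgoing edge $e' = (k, u)$ and every $i \ge 1$, set $g'_{e'}(i) = g_{e'}(i) + c_m$; elsewhere, $g' = g$. The task then reduces to showing that $g'$ is a legitimate cost function, is equivalent to $g$, and extends $f^{a+1}$.

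Equivalence is the clean step: for any strategy profile $\s$ and any player whose path avoids $k$, no edge of that path is modified, and for any player whose path passes through~$k$, the path uses exactly one incoming and one outgoing edge at~$k$, so the $-c_m$ and $+c_m$ shifts cancel in that player's cost, independently of the loads on those edges. Non-negativity and monotonicity of $g'$ follow from the algorithm's choice of $m$: because $g$ extends $f^a$ and is equivalent to $f$, a short calculation gives $t(ep) = g_e(1) + \sum_{e'' \in p} g_{e''}(1)$, so $m$ equivalently minimises $g_e(1)$ over incoming edges $e$ of~$k$; combined with monotonicity of~$g_e$, this yields $g_e(i) \ge g_m(1) = c_m$ for every incoming edge $e$ and every $i \ge 1$, and monotonicity is preserved under a constant shift. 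Finally, to check that $g'$ extends $f^{a+1}$, the same computation of $t$ gives $f^{a+1}_e(1) = t(ep) - t(mp) = g_e(1) - g_m(1) = g'_e(1)$ for incoming edges $e \ne m$, and $f^{a+1}_m(1) = 0 = g'_m(1)$; outgoing edges of~$k$ are not assigned values by $f^{a+1}$, since $u$ strictly succeeds~$k$ in the topological order for every outgoing edge $(k, u)$ and such heads have not yet been processed, so the compensating $+c_m$ shift is unconstrained.

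The main subtlety is choosing the right $g'$. A naive attempt that only shifts the one-player values $g_e(1)$ on edges incident to~$k$ breaks equivalence, because a player whose path passes through~$k$ may have load~$1$ on its incoming edge but load greater than~$1$ on its outgoing edge, so the two shifts would not cancel. The fix is to apply the shift to \emph{all} loads $i \ge 1$ simultaneously, which is legal precisely because $f^{a+1}$ leaves multi-player values undetermined on edges incident to~$k$.
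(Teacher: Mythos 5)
Your proof takes essentially the same approach as the paper's: both construct the new total extension by shifting all load values on $k$'s incoming edges by $-f'_m(1)$ and on $k$'s outgoing edges by $+f'_m(1)$, then verify equivalence via path cancellation at $k$ and check that this extension agrees with $f^{a+1}$ at load $1$. You add a welcome extra check (non-negativity and monotonicity of the shifted function, using that $m$ minimises $t$ and hence the one-player cost among $k$'s incoming edges) that the paper's proof leaves implicit, but the core argument is identical.
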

\begin{proof}
It can be verified that the algorithm assigns a cost to $f^{a+1}_e(1)$ for every
edge $e = (v, u)$ with $u \preceq k$. To complete the proof of the lemma, we
must show that~$f^{a+1}$ is a partial equivalent cost function. Since~$f^{a}$ is
a partial equivalent cost function, there must exist a total extension
of~$f^{a}$ that is equivalent to~$f$. Let~$f'$ denote such an extension. We 
use~$f'$ to construct~$f''$, which is a total extension of~$f^{a+1}$ that is
equivalent to~$f$.

Let $e = (v, k)$ be an incoming edge at $k$. We begin by deriving a formula for
$t(ep)$, which is computed on line~\ref{l:tep}. Note that, since $f'$ is
equivalent to $f$, we have $c_{\s}(p'ep) = \sum_{e' \in p' e p} f'_{e'}(1)$.
Note also that $f'_{e'}(1) = f^{a}_{e'}(1)$ for every edge $e' \in p'$.
Therefore, we have the following:
\begin{align*}
t(ep) &= c_{\s}(p'ep) - \sum_{e' \in p'} f^{a}_{e'}(1) \\
&=  \sum_{e' \in p' e p} f'_{e'}(1) - \sum_{e' \in p'} f'_{e'}(1) \\
&= \sum_{e' \in e p} f'_{e'}(1).
\end{align*}

For each edge $e = (v, k)$ with $e \ne e'$, line~\ref{l:diff} sets:
\begin{align*}
f^{a+1}_e(1) &= t(ep) - t(e'p) \\
&= \sum_{e' \in e p} f'_{e'}(1) -  \sum_{e' \in e' p} f'_{e'}(1) \\
&= f'_{e}(1) - f'_{e'}(1).
\end{align*}
Note also that line~\ref{l:zero} sets:
\begin{equation*}
f^{a+1}_{e'}(1) = 0 = f'_{e'}(1) - f'_{e'}(1).
\end{equation*}
Hence, we can conclude that $f^{a+1}_e(1) = f'_{e}(1) - f'_{e'}(1)$ for every
incoming edge $e = (v, k)$.

We construct the total cost function $f''$ as follows. For every edge $e = (v,
u)$, and every $i \le n$, we set:
\begin{equation*}
f''_e(i) = \begin{cases}
f'_e(i) - f'_{e'}(1) & \text{if $u = k$,} \\
f'_e(i) + f'_{e'}(1) & \text{if $v = k$,} \\
f'_e(i) & \text{otherwise.} \\
\end{cases}
\end{equation*}
Since we have shown  that $f^{a+1}_e(1) = f'_{e}(1) - f'_{e'}(1)$ for every
incoming edge $e = (v, k)$, we have that $f''_e(1)$ is a total extension
of $f^{a+1}$.

We must now show that $f''_e$ and $f$ are equivalent. We will do this by showing
that $f''$ and $f'$ are equivalent. Let $\s = (s_1, s_2, \dots, s_n)$ be an
arbitrarily chosen strategy profile. If $s_i$ does not visit $k$, then we
have:
\begin{equation*}
\sum_{e \in s_i} f''_e(n_e(\s)) = \sum_{e \in s_i} f'_e(n_e(\s)).
\end{equation*}
On the other hand, if $s_i$ does visit $k$, then it must use exactly one edge
$(v, u)$ with $u = k$, and exactly one edge $(v, u)$ with $v = k$. Therefore, we
have:
\begin{align*}
\sum_{e \in s_i} f''_e(n_e(\s)) &= \sum_{e \in s_i} f'_e(n_e(\s)) - f'_{e'}(1) +
f'_{e'}(1) \\
&= \sum_{e \in s_i} f'_e(n_e(\s)).
\end{align*}
Therefore, $f''$ is equivalent to $f'$, which also implies that it is equivalent
to $f$. Thus, we have found a total extension of $f^{i+1}$ that is equivalent to $f$, as required.
\end{proof}

\paragraph{The $k = d$ case.}

When the algorithm processes $d$, it will have a partial cost function $f^{a}$
such that $f^{a}_e(1)$ is defined for every edge $e = (v, u)$ with $u \ne d$.
The algorithm is required to produce a partial cost function $f^{a+1}$ such that
$f^{a+1}_e(1)$ is defined for all $e \in E$. We use Algorithm~\ref{alg:d} to
do this.
\begin{algorithm}
\begin{algorithmic}[1]
\Require A partial equivalent cost function $f^{a}$, such that $f^{a}_e(1)$ is
defined for all edges $e = (v, u)$ with $u \prec d$.
\Ensure A partial equivalent cost function $f^{a+1}$, such that $f^{a}_e(1)$ is
defined for all edges $e \in E$.

\ForAll {$e$ for which $f^{a}_e(1)$ is defined}
\label{l:copystart2}
\State $f^{a+1}_e(1) \leftarrow f^{a}_e(1)$
\EndFor
\label{l:copyend2}

\ForAll {$e = (v, d) \in E$}
\State $p \leftarrow$ an arbitrary $o$-$v$ path
\State $\s \leftarrow (1 \mapsto pe)$
\label{l:profile}
\State $c_\s \leftarrow \query(\s)$
\label{l:query}
\State $f^{a+1}_e(1) \leftarrow c_\s(pe) - \sum_{e' \in p} f^{a}_{e'}(1)$
\label{l:costs}
\EndFor
\end{algorithmic}
\caption{\textsc{ProcessD}}
\label{alg:d}
\end{algorithm}
Lines~\ref{l:copystart2} through~\ref{l:copyend2} ensure that $f^{a+1}$ is
equivalent to $f^a$. Then, the algorithm loops through each incoming edge $e =
(v, d)$, and line~\ref{l:costs} computes $f^{a+1}_e(1)$. Note, in particular,
that $f^a_{e'}(1)$ is defined for every edge $e' \in p$, and thus the
computation on line~\ref{l:costs} can be performed. Lemma~\ref{LEM:ALGD} shows
that Algorithm~\ref{alg:d} is correct.

\begin{lemma}
\label{LEM:ALGD}
Let $k \ne d$ be a vertex, and let $f^a$ be a partial equivalent cost function
defined for all edges $(v, u)$ with $u \prec d$. When given these inputs,
Algorithm~\ref{alg:d} computes a partial equivalent cost function $f^{a+1}$. 
\end{lemma}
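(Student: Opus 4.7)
The plan is to verify three things about the output $f^{a+1}$ of Algorithm~\ref{alg:d}: (i) it extends $f^a$, (ii) $f^{a+1}_e(1)$ is defined for every edge $e\in E$, and (iii) it is a partial equivalent of $f$. Properties (i) and (ii) follow by inspection: lines~\ref{l:copystart2}--\ref{l:copyend2} copy every previously-defined value of $f^a$ into $f^{a+1}$, and the main loop ranges over every incoming edge of $d$, assigning $f^{a+1}_e(1)$ via line~\ref{l:costs}. Together with the hypothesis that $f^a_e(1)$ is defined for every edge $(v,u)$ with $u\prec d$, this covers all of $E$, since every edge of the DAG either ends before $d$ or at $d$.

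The substantive work is (iii). I would fix an arbitrary total extension $g$ of $f^a$ that is equivalent to $f$; such a $g$ exists because $f^a$ is assumed to be a partial equivalent of $f$. The strategy is to show that $g$ already extends $f^{a+1}$; once established, this gives the desired total equivalent extension of $f^{a+1}$ for free.

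The key calculation is the following. Fix an edge $e=(v,d)$, and let $p$ be the arbitrary $o$-$v$ path selected inside the loop. For the one-player profile $\s = (1\mapsto pe)$ queried on line~\ref{l:query}, equivalence of $g$ and $f$ yields
\[
c_\s(pe) \;=\; \sum_{e'\in p} g_{e'}(1) \;+\; g_e(1).
\]
Because $g$ extends $f^a$ and every $e'\in p$ ends at some vertex $u\prec d$, we have $g_{e'}(1)=f^a_{e'}(1)$ for each $e'\in p$. Substituting and rearranging,
\[
g_e(1) \;=\; c_\s(pe) - \sum_{e'\in p} f^a_{e'}(1),
\]
which is precisely the value assigned to $f^{a+1}_e(1)$ on line~\ref{l:costs}. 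Thus $g$ agrees with $f^{a+1}$ on every newly-defined entry, so $g$ is a total extension of $f^{a+1}$ equivalent to $f$, proving $f^{a+1}$ is a partial equivalent of $f$.

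The only subtlety to flag is that the algorithm chooses $p$ arbitrarily, so a priori the computed value of $f^{a+1}_e(1)$ could depend on that choice; the identity $f^{a+1}_e(1)=g_e(1)$ derived above incidentally shows this is not the case, since the right-hand side is independent of $p$. I do not expect a real obstacle here: the proof is essentially a bookkeeping argument, structurally simpler than the proof of Lemma~\ref{LEM:ALGK} because at $d$ there are no outgoing edges, so no shifting trick (setting a minimum edge to $0$ and compensating) is needed.
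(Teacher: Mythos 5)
Your proposal is correct and matches the paper's own proof essentially step-for-step: both fix a total extension of $f^a$ equivalent to $f$ (you call it $g$, the paper calls it $f'$), and show via the same telescoping calculation on line~\ref{l:costs} that this extension already agrees with every newly-assigned value $f^{a+1}_e(1)$. The remark about independence from the choice of $p$ is a harmless extra observation, not a departure from the argument.
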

\begin{proof}
Since $f^{a}$ is a partial equivalent cost function, there must exist a cost
function $f'$ that is an extension of $f^{a}$, where $f'$ is equivalent to $f$.
We show that $f'$ is also an extension of $f^{a+1}$.

Let $e = (v, d)$ be an incoming edge at $d$. Consider line~\ref{l:costs} of the
algorithm. Note that, since $f'$ is equivalent to $f$, we have $c_\s(pe) =  
\sum_{e' \in pe} f'_{e'}(1)$. Furthermore, since $f'$ is an extension of
$f^{a+1}$, we have $f^{a}_{e'}(1) = f'_{e'}(1)$ for every $e' \in p$.
Therefore, we have:
\begin{align*}
f^{a+1}_e(1) &= c_\s(pe) - \sum_{e' \in p} f^{a}_{e'}(1) \\
& = \sum_{e' \in pe} f'_{e'}(1) - \sum_{e' \in p} f'_{e'}(1)\\
& = f'_e(1).
\end{align*}
We also have $f^{a+1}_e(1) = f'_e(1)$ for every edge $e = (v, u)$ with $u
\prec d$, and we have shown that $f^{a+1}_e(1) = f'_e(1)$ for every edge $e =
(v, u)$ with $u = d$. Therefore $f'$ is an extension of $f^{a+1}$, which implies
that $f^{a+1}$ is a partial equivalent cost function.
\end{proof}

\paragraph{Query complexity.} The algorithm makes exactly $|E|$
payoff queries in order to find the one-player costs. 
When Algorithm~\ref{alg:k} processes a vertex $k$, it makes exactly one query
for each incoming edge $(v, k)$ at $k$. The same property holds for
Algorithm~\ref{alg:d}. This implies that, in total, the algorithm makes $|E|$
queries. 

\subsection{The Many-Player Case}\label{sec:dags3}

In this section, we will assume that we have a partial equivalent cost function
$f^a$ such that $f^a_e(j)$ is defined whenever $j \le i$. We will give an
algorithm that goes through a sequence of iterations and produces a partial cost function $f^{a'}$, such that $f^{a'}_e(j)$ is defined whenever $j \le i + 1$.

\paragraph{Outline.} The algorithm for the many-player case proceeds in a
similar fashion to the algorithm for the one-player case. The algorithm is still
iterative, and it still processes vertices according to their topological order,
starting from the origin $o$, and moving towards the destination $d$. In this
algorithm, when we process a vertex $k$, we will discover, for each incoming
edge $e$ to $k$, the cost of placing $i+1$ players on $e$.

However, there is an additional complication. Our technique for discovering the
cost of placing $i+1$ players on the incoming edge at $k$ requires two edge
disjoint paths from $k$ to $d$, but there is no reason at all to assume that two
such paths exist. We say that an edge $e$ is a bridge between two vertices $v$
and $u$, if every $v$-$u$ path contains $e$. Furthermore, if we fix a vertex $k
\in V$, then we say that an edge $e$ is a $k$-bridge if $e$ is a bridge between
$k$-$d$. The following lemma can be proved using the max-flow min-cut theorem
and is a variant of Menger's theorem.
\begin{lemma}
\label{LEM:MENGER}
Let $v$ and $u$ be two vertices. There are two edge disjoint paths between $v$
and $u$ if, and only if, there is no bridge between $v$ and $u$.
\end{lemma}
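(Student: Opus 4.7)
My plan is to prove this as the $k=2$ case of Menger's theorem (edge version, directed), with the nontrivial direction obtained from the max-flow/min-cut hint mentioned in the paper. The forward direction is immediate from the definition of a bridge: if $P_1$ and $P_2$ are two edge-disjoint $v$-$u$ paths, then every edge $e$ is absent from at least one of them, so $e$ cannot lie on every $v$-$u$ path, hence no edge is a bridge between $v$ and $u$.

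For the reverse direction, I would turn the underlying digraph into a unit-capacity flow network with source $v$ and sink $u$, preserving edge orientations. By the max-flow/min-cut theorem, the maximum integer flow equals the size of a minimum $v$-$u$ edge cut. By integer flow decomposition on unit-capacity networks, the maximum integer flow also equals the maximum number of edge-disjoint directed $v$-$u$ paths. So it suffices to argue that the minimum cut has size at least $2$.

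Under the standing assumption in the paper that $v$ and $u$ are connected by at least one directed $v$-$u$ path (which holds throughout, since every vertex is required to appear on some $o$-$d$ path in the preprocessed DAG), the min cut is at least $1$. Suppose for a contradiction it equals exactly $1$, with the cut consisting of a single edge $e$. Then deleting $e$ disconnects $v$ from $u$, so every $v$-$u$ path must pass through $e$, making $e$ a bridge between $v$ and $u$ and contradicting the hypothesis. Therefore the min cut has size at least $2$, and the two edge-disjoint paths extracted from a maximum integer flow witness the conclusion.

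I do not anticipate any substantive obstacle; this is essentially a textbook application of max-flow/min-cut combined with flow decomposition. The only mild point of care is making explicit that a $v$-$u$ path exists (needed so that the min cut is positive and the vacuous case is avoided), and noting that the max-flow/min-cut and flow-decomposition steps apply equally well to directed graphs, which is the setting of the paper's DAG congestion model.
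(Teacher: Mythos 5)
Your proof is correct and follows essentially the same route as the paper: both build a unit-capacity flow network on the digraph, apply max-flow/min-cut, and identify a size-one cut with a bridge. You spell out the flow-decomposition and min-cut-size steps in more detail than the paper (and flag the need for at least one $v$-$u$ path, which holds in every application of the lemma), but the underlying argument is identical.
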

\begin{proof}
Let $(V, E)$ be a graph, and let $v, u \in V$ be two vertices. We construct a
network flow instance where every edge $e \in E$ has capacity $1$, and we ask
for the maximum flow between $v$ and $u$. Since each edge has capacity $1$, we
have that the maximum flow between $v$ and $u$ is greater than $1$ if, and only
if, there are two edge-disjoint paths between $v$ and~$u$. Moreover, by the
max-flow min-cut theorem, 
the maximum flow from $v$ to
$u$ is greater than $1$ if and only if there is no bridge between $v$ and $u$. 
\end{proof}

As a consequence of Lemma~\ref{LEM:MENGER}, we can only process~$k$
if there are no $k$-bridges. To resolve this, before attempting to
process $k$, we first use a separate algorithm to determine the cost of placing
$i+1$ players on each $k$-bridge. After doing this, we can then find two $k$-$d$
paths that are edge disjoint \emph{except for $k$ bridges}. This, combined with
the fact that we know the cost of placing $i+1$ players on each $k$-bridge, is
sufficient to allow us to process~$k$.

The remainder of this section will proceed as follows. We first describe our
algorithm for finding the costs of the $k$ bridges. After doing so, we then
describe our algorithm for processing $k$.

\paragraph{Bridges.}

Given a vertex $k$, we show how to determine the cost of the $k$-bridges. Let
$b_1$, $b_2$, \dots, $b_m$ denote the list of $k$-bridges sorted according to
the topological ordering $\preceq$. That is, if $b_1 = (v_1, u_1)$, and $b_2 =
(v_2, u_2)$, then we have $v_1 \prec v_2$, and so on. Our algorithm is given a
partial cost function $f^a$, such that $f^{a}_e(j)$ is defined for all $j \le
i$, and returns a cost function $f^{a+1}$ that is an extension of $f^a$ where,
for all $\ell$, we have that $f^{a+1}_{b_\ell}(i+1)$ is defined.

Our algorithm processes the $k$-bridges in reverse topological order, starting with the final bridge~$b_m$.
Suppose that we are processing the bridge~$b_j = (v, u)$. 
We will make one payoff query to find the cost of $b_j$, which is described by
the following diagram.
\begin{center}
\resizebox{0.55\textwidth}{!}{
\begin{tikzpicture}[node distance=3cm]
\node [state] (o) {$o$};
\node [state,right of=o] (k) {$k$};
\node [state,right of=k] (a) {$v$};
\node [state,right of=a,node distance=2cm] (b) {$u$};
\node [state,right of=b] (d) {$d$};
\path[->]
	(a) edge node [auto] {$b_j$} (b)
	(b) edge [bend left, dashed] node [auto] {$p_4$} (d)
	(b) edge [bend right, dashed] node [auto,swap] {$p_5$} (d)
	(o) edge [dashed] node [auto] {$p_2$} (k)
	(k) edge [dashed] node [auto] {$p_3$} (a)
	(o) edge [dashed,bend left] node [auto] {$p_1$} (a)
	;
\end{tikzpicture}
}
\end{center}
The dashed lines in the diagram represent paths. They must satisfy some
special requirements, which we now describe. The paths $p_4$ and $p_5$ must be
edge disjoint, apart from $k$-bridges. The following lemma shows that we can
always select two such paths.

\begin{lemma}
\label{LEM:P4P5}
For each $k$-bridge $b_j = (v, u)$, there exists two paths $p_4$ and $p_5$ from $u$ to $d$ such that 
$p_4 \cap p_5 = \{b_{j +1}, b_{j+2}, \dots b_m\}$.
\end{lemma}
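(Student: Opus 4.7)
The plan is to build $p_4$ and $p_5$ by concatenating pairs of edge-disjoint segments, one pair per ``between-bridges'' region, with the shared bridges $b_{j+1},\dots,b_m$ used as connectors in both paths.

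First, I would observe that every $b_l$ with $l > j$ is in fact a $u$-$d$ bridge. Indeed, since $b_j = (v,u)$ is a $k$-bridge, every $k$-$d$ path factors as a $k$-$v$ prefix followed by $b_j$ followed by a $u$-$d$ suffix. Any $k$-$d$ path also uses $b_l$; since $v_l \succ v$ in the topological order and $(V,E)$ is a DAG, $b_l$ must appear after $b_j$, i.e., on the $u$-$d$ suffix. Thus every $u$-$d$ path uses $b_l$, so $b_l$ is a $u$-$d$ bridge. Conversely, any $u$-$d$ bridge is also a $k$-bridge (compose with a fixed $k$-$v$ path and $b_j$), so $\{b_{j+1},\dots,b_m\}$ is exactly the set of $u$-$d$ bridges.

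Next, write $b_l = (v_l, u_l)$, set $u_j := u$, and $v_{m+1} := d$. For each $l \in \{j, j+1, \dots, m\}$, consider the vertex pair $(u_l, v_{l+1})$. There is no bridge between $u_l$ and $v_{l+1}$: such a bridge would be a $u$-$d$ bridge not in our list, contradicting completeness. Applying Lemma \ref{LEM:MENGER}, pick two edge-disjoint paths $\pi_4^{(l)}$ and $\pi_5^{(l)}$ from $u_l$ to $v_{l+1}$, and then define
\[
p_4 \;=\; \pi_4^{(j)} \cdot b_{j+1} \cdot \pi_4^{(j+1)} \cdot b_{j+2} \cdots b_m \cdot \pi_4^{(m)}, \qquad
p_5 \;=\; \pi_5^{(j)} \cdot b_{j+1} \cdot \pi_5^{(j+1)} \cdot b_{j+2} \cdots b_m \cdot \pi_5^{(m)}.
\]
By construction $\{b_{j+1},\dots,b_m\} \subseteq p_4 \cap p_5$.

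The main obstacle is the reverse inclusion: I must show that no non-bridge edge can appear in both paths. The key claim is that the edge sets of the different segments are pairwise disjoint. Let $e$ be any edge lying on some $u$-$d$ path. Every bridge $b_l$ lies on every $u$-$d$ path, so $e$ and $b_l$ have a well-defined relative order on any path containing them; since $(V,E)$ is a DAG this order is determined by the topological position of $e$ alone. Hence there is a unique index $s$ such that $b_{j+1},\dots,b_s$ precede $e$ topologically and $b_{s+1},\dots,b_m$ follow it, and consequently $e$ can only appear in segment $\pi_\cdot^{(s)}$. Within a single segment, $\pi_4^{(l)}$ and $\pi_5^{(l)}$ are edge-disjoint by choice, and across segments they share no edges by the uniqueness argument. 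Therefore $p_4 \cap p_5 = \{b_{j+1},\dots,b_m\}$, as required.
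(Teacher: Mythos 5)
Your construction is essentially the same as the paper's: observe there is no bridge strictly between consecutive $k$-bridges, apply Lemma~\ref{LEM:MENGER} to get two edge-disjoint paths on each inter-bridge segment (including from the last bridge to $d$), and concatenate. The paper's proof stops there; you additionally supply the reverse inclusion $p_4 \cap p_5 \subseteq \{b_{j+1},\dots,b_m\}$ by characterising $\{b_{j+1},\dots,b_m\}$ as exactly the $u$-$d$ bridges and then arguing, via the topological order, that each non-bridge edge can live in at most one segment, so that the only repeated edges across the two concatenations are the bridges themselves. This is a genuine (and worthwhile) tightening of a step the paper leaves implicit, but the overall decomposition and use of the max-flow/min-cut lemma match the paper's argument.
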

\begin{proof}
Note that for each~$\ell$, there cannot exist a bridge between $b_\ell$ and
$b_{\ell+1}$. Therefore, we can apply Lemma~\ref{LEM:MENGER} to argue that there
must exist two edge-disjoint paths between $b_\ell$ and $b_{\ell+1}$ For the
same reason, we can find two edge-disjoint paths between $b_m$ and $d$. To
complete the proof, we simply concatenate these paths. 
\end{proof}

On the other hand, the paths $p_1$, $p_2$, and $p_3$ must satisfy a different
set of constraints, which are formalised by the following lemma.
\begin{lemma}
\label{LEM:P1P3}
Let $b_j = (v, u)$ be a $k$-bridge, let $p_2$ be an arbitrarily chosen $o$-$k$
path. There exists an $o$-$k$ path $p_1$ and a $k$-$v$ path $p_3$ such that:
$p_1$ and $p_3$ are edge disjoint; and 
if $p_1$ visits $k$, then $p_2$ and $p_1$ use different incoming edges for $k$.
\end{lemma}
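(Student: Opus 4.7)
My plan is to cast the problem as a two-source edge-disjoint paths question and prove feasibility via the preprocessing step, which guarantees no dependent edges. (Throughout, I take $p_1$ to be an $o$-$v$ path, matching the diagram.)

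I would begin with a structural observation: because $b_j=(v,u)$ is a $k$-bridge and every vertex reaches $d$, the edge $b_j$ must be the unique outgoing edge of $v$. Indeed, any alternative outgoing edge $(v,w)$ could be combined with a $k$-$v$ path (which exists since $b_j$ forces one) and a $w$-$d$ path to yield a $k$-$d$ path avoiding $b_j$. Consequently, an $o$-$d$ path uses $b_j$ if and only if it passes through $v$.

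Next, let $e^*$ be the incoming edge of $k$ used by $p_2$, and let $G^*$ be the graph obtained from $G$ by deleting $e^*$ and attaching a super-source $s$ with unit-capacity edges $s\to o$ and $s\to k$. I aim to find two edge-disjoint $s$-$v$ paths in $G^*$, one beginning $s\to o$ and the other $s\to k$. Taking the $o$-$v$ suffix of the former as $p_1$ and the $k$-$v$ suffix of the latter as $p_3$ yields the desired paths: edge-disjointness in $G^*$ transfers to $G$, and since $e^*$ is absent from $G^*$, any use of $k$ by $p_1$ must be via an edge different from $e^*$.

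The heart of the argument is showing such a flow exists, which by max-flow/min-cut reduces to verifying no single edge of $G^*$ is an $s$-$v$ cut. I would handle the three possibilities. Removing $s\to o$ alone fails because $k$ still reaches $v$ via the unavoidable $k$-$d$ path through $b_j$. Removing $s\to k$ alone fails because if every $o$-$v$ path in $G$ used $e^*$, then $e^*$ would lie on precisely the $o$-$d$ paths passing through $v$, making $e^*$ and $b_j$ dependent and contradicting preprocessing. Finally, suppose some other edge $e$ of $G^*$ separates $\{o,k\}$ from $v$; then $e$ lies on every $o$-$v$ path in $G^*$ and every $k$-$v$ path in $G$ (the latter being unaffected by deleting $e^*$, which is an incoming edge of $k$). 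I would then show $e$ and $b_j$ are dependent in $G$: the forward direction (every $o$-$d$ path through $v$ uses $e$) is by case analysis on whether the $o$-$v$ prefix uses $e^*$, with the $e^*$-case handled by noting the post-$k$ portion is a $k$-$v$ path and hence uses $e$; the reverse direction (every $o$-$d$ path using $e$ passes through $v$) follows because, if $e$'s head $\beta$ could reach $d$ by some path avoiding $v$, concatenating a $k$-$\beta$ subpath (which exists since $e$ lies on a $k$-$v$ path) with that avoidance path would produce a $k$-$d$ path bypassing $b_j$, contradicting the $k$-bridge property.

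The main obstacle is the last dependency argument, especially its reverse direction: the proof must exploit the $k$-bridge structure at exactly the right moment to preclude any $\beta$-$d$ detour that would evade $v$, and it must carefully translate the min-cut condition in $G^*$ into a dependency statement in $G$ despite the removal of $e^*$.
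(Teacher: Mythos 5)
Your proof is correct, and it takes a genuinely different route from the paper's. The paper constructs $p_1,p_3$ explicitly by cases on whether $b_j$ is the first $k$-bridge: for $j>1$ it picks an $o$-$d$ path through $b_j$ that avoids $b_{j-1}$ (possible because preprocessing makes $b_{j-1},b_j$ non-dependent), takes its $o$-$v$ prefix as $p_1$, and extends an arbitrary $k$-to-$b_{j-1}$ path to $v$ along one of two Menger paths from Lemma~\ref{LEM:MENGER}, re-routing $p_1$ along one of them if both meet $p_1$; the case $j=1$ is treated separately with a further split on the in-degree of $k$. You instead form $G^*=G\setminus\{e^*\}$, attach a super-source $s$ with unit arcs into both $o$ and $k$, first observe that $b_j$ is $v$'s unique outgoing edge (so for $o$-$d$ paths ``uses $b_j$'' and ``passes through $v$'' coincide), and then show via max-flow/min-cut that no single edge of $G^*$ separates $s$ from $v$: any such edge would be dependent on $b_j$, contradicting preprocessing. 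Two edge-disjoint $s$-$v$ paths then give $p_1$ and $p_3$ directly, with the requirement on $k$'s incoming edge enforced automatically by the deletion of $e^*$. Both arguments rest on the same two ingredients --- the Menger/max-flow fact and the no-dependent-edges guarantee --- but yours concentrates everything into one clean cut argument, avoiding the case analysis on $j$ and on $k$'s in-degree, and it handles the second condition of the lemma explicitly, whereas the paper leaves that condition implicit (for $j>1$, a $p_1$ visiting $k$ would be forced through $b_{j-1}$, which it was chosen to avoid). Your parenthetical remark that $p_1$ should be an $o$-$v$ path (the lemma statement says $o$-$k$, but the diagram and Algorithm~\ref{alg:bridges} require $o$-$v$) is a correct reading of an apparent typo.
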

\begin{proof}
We show how $p_1$ and $p_3$ can be constructed. This splits into two cases, and
we begin by considering the bridges $b_j$ with $j > 1$. Due to our preprocessing
from Lemma~\ref{LEM:PREPROCESS}, $b_j$ and $b_{j-1}$ cannot be dependent. Note
that every $o$-$d$ path that uses~$b_{j-1}$ must also use~$b_j$. Therefore,
there must exist an $o$-$d$ path~$p$ that uses~$b_j$ and not~$b_{j-1}$. We
fix~$p_1$ to be the prefix of~$p$ up to the point where it visits~$b_j$.
Let~$p'_3$ be an arbitrarily selected path from~$k$ to~$b_{j-1}$. Note that
$p_1$ cannot share an edge with~$p'_3$, because otherwise $p_1$ would be forced
to visit $b_{j-1}$. 

We now show how $p'_3$ can be extended to reach $b_j$ without
intersecting~$p_1$. Since there are no bridges between $b_{j-1}$ and $b_{j}$, we
can apply Lemma~\ref{LEM:MENGER} to obtain two edge-disjoint paths $q$ and $q'$
from $b_{j-1}$ to~$b_j$. If one of these paths does not intersect with $p_1$,
then we are done. Otherwise suppose, without loss of generality, that $p_1$
intersects with $q$ before it intersects with $q'$. We create a path $p'_1$ that
follows $p_1$ until the first intersection with $q$, and follows $q$ after that.
Since $q$ and $q'$ are disjoint, the paths $p_1'$ and $p'_3q'$ satisfy the
required conditions.

Now we consider the bridge $b_1$. If $k$ has at least two incoming edges,
then we can apply Lemma~\ref{LEM:MENGER} to find two edge disjoint paths from
$k$ to $b_1$, and we can easily construct $p_1$ and $p_3$ using these paths.
Otherwise, let $e$ be the sole incoming edge at $k$. Since $e$ and $b_1$ are not
dependent, we can find a path~$p_1$ from~$o$ to~$b_1$ which does not use~$e$,
and we can use the same technique as we did for $j > 1$ to find a path $p_3$
from~$k$ to~$b_1$ that does not intersect with~$p_1$. 
\end{proof}

\begin{algorithm}
\begin{algorithmic}[1]
\Require A vertex $k$, and a partial equivalent cost function $f^a$, such that
$f^{a}_e(j)$ is defined for every $j \le i$.
\Ensure A partial equivalent cost function $f^{a+1}$, such that $f^{a+1}$ is an
extension of $f^a$, and $f^{a+1}_e$ is defined for every $e$ that is a $k$
bridge.

\ForAll {$e$ and $j$ for which $f^{a}_e(j)$ is defined}
\label{l:copystart3}
\State $f^{a+1}_e(j) \leftarrow f^{a}_e(j)$
\EndFor
\label{l:copyend3}

\For {$j =$ m to $1$}
\State $p_4$, $p_5 \leftarrow $ paths chosen according to
Lemma~\ref{LEM:P4P5}
\State $p_1$, $p_2$, $p_3$ $\leftarrow$ paths chosen according to
Lemma~\ref{LEM:P1P3}
\State $\s \leftarrow (1 \mapsto p_1 b_j p_4, i \mapsto p_2 p_3 b_j p_5)$
\State $c_\s \leftarrow \query(\s)$
\State $f^{a+1}_{b_j}(i+1) \leftarrow c_\s(p_1 b_j p_4) - \sum_{e \in p_1}
f^{a+1}_e(n_e(\s)) - \sum_{e \in p_4} f^{a+1}_e(n_e(\s))$
\label{l:bridgesum}
\EndFor
\end{algorithmic}
\caption{\textsc{FindKBridges}($k$)}
\label{alg:bridges}
\end{algorithm}

Algorithm~\ref{alg:bridges} shows how the cost of placing $i+1$ players on each
of the $k$-bridges can be discovered. Note that on line~\ref{l:bridgesum}, since
$\s$ assigns one player to $p_1$, we have $n_e(\s) = 1$ for every $e \in p_1$.
Therefore, $f^{a+1}_e(n_e(\s))$ is known for every edge $e \in p_1$.
Moreover, for every edge $e \in p_4$, we have that $n_e(\s) = i + 1$ if $e$ is a
$k$-bridge, and we have $n_e(\s) = 1$, otherwise. Since the algorithm processes
the $k$-bridges in reverse order, we have that $f^{a+1}_e(n_e(\s))$ is defined
for every edge $e \in p_4$. The following lemma shows that
line~\ref{l:bridgesum} correctly computes the cost of $b_j$.

\begin{lemma}
\label{LEM:BRIDGECORRECT}
Let $k$ be a vertex, and let $f^a$ be a partial equivalent cost function, such
that $f^{a}_e(j)$ is defined for every $j \le i$. Algorithm~\ref{alg:bridges}
computes a partial equivalent cost function $f^{a+1}$, such that $f^{a+1}$ is an
extension of $f^a$, and $f^{a+1}_e$ is defined for every $e$ that is a
$k$-bridge.
\end{lemma}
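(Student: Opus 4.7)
The plan is to prove the lemma by downward induction on $j$: after processing bridges $b_m, b_{m-1}, \ldots, b_j$, the current $f^{a+1}$ is a partial equivalent of $f$ that extends $f^a$, with $f^{a+1}_{b_l}(i+1)$ defined for every $l \ge j$. The base case is the copying in lines~\ref{l:copystart3}--\ref{l:copyend3}, which trivially produces a partial equivalent of $f$, witnessed by any total extension of $f^a$ equivalent to $f$ (which exists because $f^a$ is itself a partial equivalent of $f$).

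For the inductive step, I would first invoke Lemmas~\ref{LEM:P4P5} and~\ref{LEM:P1P3} to check that $p_1 b_j p_4$ and $p_2 p_3 b_j p_5$ are valid $o$-$d$ paths, so $\s$ is a legitimate assignment. Then I would compute $n_e(\s)$ for each $e \in p_1 \cup \{b_j\} \cup p_4$. The bridge $b_j$ itself carries $1 + i$ players, which is the value we are solving for. For $e \in p_4$, the identity $p_4 \cap p_5 = \{b_{j+1}, \ldots, b_m\}$ (Lemma~\ref{LEM:P4P5}) combined with the DAG ordering (edges of $p_4$ have endpoints $\succeq u$ and so cannot lie in $p_1$, $p_2$, or $p_3$) gives $n_e(\s) = i+1$ if $e$ is a later $k$-bridge and $n_e(\s) = 1$ otherwise; in either case $f^{a+1}_e(n_e(\s))$ is already in the domain, by the induction (for later $k$-bridges processed earlier in the loop) and by $f^a$ (for non-bridge edges), respectively. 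For $e \in p_1$, Lemma~\ref{LEM:P1P3} together with the DAG ordering ensures $e \notin p_2 \cup p_3 \cup p_4 \cup p_5$, so $n_e(\s) = 1$ and $f^a_e(1)$ is defined.

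With every non-$b_j$ term in the cost of $p_1 b_j p_4$ known, I would fix a total extension $f^*$ of the pre-update $f^{a+1}$ that is equivalent to $f$ (the inductive hypothesis guarantees one exists). By equivalence,
\[
c_\s(p_1 b_j p_4) \;=\; \sum_{e \in p_1 b_j p_4} f^*_e(n_e(\s)),
\]
and since $f^{a+1}$ agrees with $f^*$ on every summand appearing on line~\ref{l:bridgesum}, rearranging shows that the assignment there sets $f^{a+1}_{b_j}(i+1) = f^*_{b_j}(i+1)$. Hence $f^*$ is still a total equivalent extension of the updated $f^{a+1}$, and $f^{a+1}_{b_l}(i+1)$ is now defined for all $l \ge j$, closing the induction.

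The main obstacle I anticipate is the disjointness bookkeeping needed to conclude $n_e(\s) = 1$ for every $e \in p_1$: Lemma~\ref{LEM:P1P3} as stated only guarantees edge-disjointness with $p_3$ and a constraint on the incoming edges at $k$, so ruling out $e \in p_1 \cap p_2$ requires unpacking the topology and leaning on the preprocessing of Lemma~\ref{LEM:PREPROCESS} (no pair of dependent edges) to ensure that the paths promised by Lemma~\ref{LEM:P1P3} can be chosen edge-disjoint from $p_2$ outside of the constraint at $k$. Verifying this carefully is where the real work lies; the rest of the proof is essentially bookkeeping around the query's output.
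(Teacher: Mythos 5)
Your proposal follows the same reverse-inductive route as the paper: process the $k$-bridges $b_m, \ldots, b_1$ in order, use Lemmas~\ref{LEM:P4P5} and~\ref{LEM:P1P3} to pick the query paths, and read off $f^{a+1}_{b_j}(i+1)$ from the queried cost of $p_1 b_j p_4$ by subtracting the already-known terms. The only stylistic difference is that the paper fixes one total extension $f'$ of $f^a$ equivalent to $f$ at the outset and checks that $f'$ remains an extension after each assignment, whereas you re-choose $f^*$ at every step; these are equivalent, and the paper's version is just slightly tidier.

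Where you depart, and where I'd push back, is in the treatment of $n_e(\s)$ for $e \in p_1$. You correctly flag that Lemma~\ref{LEM:P1P3} guarantees only disjointness of $p_1$ from $p_3$ plus the ``different incoming edge at $k$'' clause, so nothing rules out $p_1 \cap p_2 \ne \emptyset$. But your proposed repair — argue from the preprocessing of Lemma~\ref{LEM:PREPROCESS} that $p_1$ can be chosen edge-disjoint from $p_2$ — does not work in general. If, say, a single edge $(o,a)$ is the only edge leaving the source, every $o$-$k$ path and every $o$-$v$ path must use it, so $p_1$ and $p_2$ are forced to share it; yet $(o,a)$ need not be dependent on any other edge and thus survives preprocessing. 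The fact to use instead is topological: every edge of $p_2$ except its last has head strictly before $k$, and the last incoming edge at $k$ is kept out of $p_1$ by Lemma~\ref{LEM:P1P3}. So any $e \in p_1 \cap p_2$ has head $\prec k$, and because the outer algorithm processes vertices in topological order, $f^a_e(i+1)$ for such $e$ was already determined before $k$ is reached. This means the hypothesis of Lemma~\ref{LEM:BRIDGECORRECT} should really be the stronger precondition that the caller actually provides (as stated in Lemma~\ref{LEM:ALGMANYK}): $f^a_e(j)$ defined for all $j\le i$, and for $j=i+1$ on all edges $(w,w')$ with $w'\prec k$. With that hypothesis, line~\ref{l:bridgesum} is well-defined for $e\in p_1$ whether $n_e(\s)=1$ or $n_e(\s)=i+1$, and the induction goes through. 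Worth noting: the paper's own proof of this lemma also elides this point, writing ``By Lemma~\ref{LEM:P1P3}, we have that $n_e(\s)=1$'' for all $e\in p_1$, which is more than the cited lemma gives; your instinct that this step needs more care than the citation suggests is the right one, it's just that disjointness is the wrong way to close the gap.
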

\begin{proof}
It can be verified that the algorithm constructs a partial cost function
$f^{a+1}$ that is an extension of~$f^{a}$, where~$f^{a+1}_e$ is defined for
every $e$ that is a $k$-bridge. We must show that~$f^{a+1}$ is partially
equivalent to~$f$. Since~$f^{a}$ is partially equivalent to $f$, there exists
some total cost function $f'$ that is an extension of $f^{a}$, such that $f'$ is
equivalent to $f$. We will show that $f'$ is also an extension of $f^{a+1}$.

We will do so inductively. The inductive hypothesis is that $f^{a+1}_e(i+1) =
f'_e(i+1)$ for every $e = b_l$ with $\ell > j$. The base case, where $j = m$, is
trivial, because there are no $k$-bridges $b_l$ with $\ell > m$. Now suppose that
we have shown the inductive hypothesis for some $j$. We show that
$f^{a+1}_{b_j}(i+1) = f'_{b_j}(i+1)$. Let $\s$ be the strategy queried when the
algorithm considers $b_j$. 

Consider an edge $e \in p_1$. By Lemma~\ref{LEM:P1P3}, we have that $n_e(\s) =
1$. By assumption, we have that $f^{a+1}_e(1) = f^{a}_e(1)$ for every edge $e$,
and therefore $f^{a+1}_e(n_e(\s)) = f'_e(n_e(\s))$ for every edge $e \in p_1$.

Now consider an edge $e \in p_4$. By Lemma~\ref{LEM:P4P5}, we
have that $n_e(\s) = 1$ whenever $e$ is not a $k$-bridge, and we have $n_e(\s) =
i+1$ whenever $e$ is a $k$-bridge. Therefore, by the inductive hypothesis, we
have that $f^{a+1}_e(n_e(\s)) = f'_e(n_e(\s))$ for every $e \in p_4$. 

Since $f'$ is equivalent to $f$, we have that $c_\s(p_1 b_l p_3) = \sum_{e \in
p_1 b_l p_3} f'_e$. Therefore, line~\ref{l:bridgesum} sets:
\begin{align*}
f^{a+1}_{b_j}(i+1) &= c_\s(p_1 b_j p_4) - \sum_{e \in p_1} f^{a+1}_e(n_e(\s)) - \sum_{e \in p_4} f^{a+1}_e(n_e(\s)) \\
&= \sum_{e \in p_1 b_j p_4} f'_e(n_e(\s)) - \sum_{e \in p_1} f'_e(n_e(\s)) -
\sum_{e \in p_4} f'_e(n_e(\s)) \\
&= f'_{b_j}(n_e(\s)) = f'_{b_j}(i + 1).
\end{align*}
Thus, the algorithm correctly sets $f^{a+1}_{b_j}(i+1) = f'_{b_j}(i+1)$.
\end{proof}

\paragraph{Incoming edges of $k$.}

We now describe the second part of the many-player case. After finding
the cost of each $k$-bridge, we find the cost of each incoming edge at $k$.
The following diagram describes
how we find the cost of $e = (v, k)$, an incoming edge at $k$ .
\begin{center}
\resizebox{0.45\textwidth}{!}{
\begin{tikzpicture}[node distance=3cm]
\node[state] (o) {$o$};
\node[state] [right of=o] (v) {$v$};
\node[state] [right of=v,node distance=2cm] (k) {$k$};
\node[state] [right of=k] (d) {$d$};
\path[->]
	(v) edge node[auto] {$e$} (k)
	(o) edge [dashed] node[auto] {$p$} (v)
	(k) edge [dashed,bend left] node[auto] {$p_1$} (d)
	(k) edge [dashed,bend right] node[auto,swap] {$p_2$} (d)
	;
\end{tikzpicture}
}
\end{center}
The path $p$ is an arbitrarily chosen path from $o$ to $v$. The paths $p_1$ and
$p_2$ are chosen according to the following lemma.

\begin{lemma}
\label{LEM:KDISJOINT}
There exist two $k$-$d$ paths $p_1, p_2$ such that every edge in $p_1 \cap
p_2$ is a $k$-bridge.
\end{lemma}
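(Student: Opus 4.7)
The plan is to use the $k$-bridges to decompose the $k$-to-$d$ subgraph into segments, apply Menger's theorem (Lemma~\ref{LEM:MENGER}) inside each segment, and then concatenate.

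First, I would enumerate the $k$-bridges as $b_1, b_2, \dots, b_m$ in topological order, writing $b_j = (v_j, u_j)$, and setting $v_0 := k$ and $u_{m+1} := d$ for convenience. Any $k$-$d$ path must traverse all the $b_j$'s in this order, so such a path naturally splits into subpaths from $u_j$ to $v_{j+1}$ for $j = 0, 1, \dots, m$, interleaved with the bridges themselves. My goal is to build $p_1$ and $p_2$ segment by segment so that they share exactly the edges $b_1, \dots, b_m$.

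The key claim is that within each segment there are no ``internal'' bridges: for every $j$, there is no bridge between $u_j$ and $v_{j+1}$. For otherwise, if $e$ is a bridge between $u_j$ and $v_{j+1}$, then since every $k$-$d$ path must visit $u_j$ (via $b_j$, or trivially when $j=0$) and subsequently visit $v_{j+1}$ (via $b_{j+1}$, or trivially when $j=m$), every $k$-$d$ path must traverse $e$. So $e$ itself is a $k$-bridge, contradicting that $b_1, \dots, b_m$ enumerates all $k$-bridges (note $e \neq b_j, b_{j+1}$, since $e$ lies strictly between $u_j$ and $v_{j+1}$). Applying Lemma~\ref{LEM:MENGER} in each segment therefore yields two edge-disjoint paths $q_j^{(1)}$ and $q_j^{(2)}$ from $u_j$ to $v_{j+1}$.

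Now define $p_1 := q_0^{(1)}\, b_1\, q_1^{(1)}\, b_2 \,\cdots\, b_m\, q_m^{(1)}$ and similarly $p_2$ using the $q_j^{(2)}$'s. By construction, the only edges shared between $p_1$ and $p_2$ are the bridges $b_1, \dots, b_m$, provided that no two segment-paths share an edge. This last point is where I would use the DAG structure: an edge $e = (x,y)$ lying on a $u_i$-$v_{i+1}$ path forces $u_i \preceq x$ and $y \preceq v_{i+1}$, so if it also lay on a $u_j$-$v_{j+1}$ path for $j > i$ we would get $u_j \preceq x$ and $y \preceq v_{j+1}$; combined with $v_{i+1} \prec u_j$ this yields $y \prec x$, contradicting acyclicity. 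The main obstacle, which this argument dispatches, is precisely ruling out ``cross-segment'' edge reuse; once that is handled, the construction gives the desired $p_1, p_2$ with $p_1 \cap p_2 \subseteq \{b_1, \dots, b_m\}$, so every edge in $p_1 \cap p_2$ is a $k$-bridge.
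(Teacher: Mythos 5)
Your proof is correct and takes essentially the same approach as the paper's: decompose the $k$-to-$d$ subgraph at the $k$-bridges and apply Lemma~\ref{LEM:MENGER} within each bridge-free segment, then concatenate. The paper condenses this by invoking Lemma~\ref{LEM:P4P5} (whose own proof is precisely your segment-by-segment construction from $u_1$ onward) together with one application of Lemma~\ref{LEM:MENGER} for the initial $k$-to-$v_1$ segment, whereas you re-derive the decomposition inline and also make explicit the cross-segment disjointness argument via the topological order, a detail the paper leaves implicit.
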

\begin{proof}
Let $b_1$ be the first $k$-bridge. By Lemma~\ref{LEM:MENGER} there exists edge
disjoint paths from $k$ to $b_1$. The proof can then be completed by applying 
Lemma~\ref{LEM:P4P5}. 
\end{proof}

\begin{algorithm}
\begin{algorithmic}[1]

\Require A vertex $k$, and a partial equivalent cost function $f^a$, such that
$f^{a}_e(j)$ is defined for all $e \in E$ when $j \le i$, all $e = (v,
u)$ with $u \prec k$ when $j = i+1$, and all $k$-bridges when $j = i+1$.
\Ensure A partial equivalent cost function $f^a$, such that
$f^{a}_e(j)$ is defined for all $e \in E$ when $j \le i$, and for all $e = (v,
u)$ with $u \preceq k$ when $j = i+1$.

\ForAll {$e$ and $j$ for which $f^{a}_e(j)$ is defined}
\label{l:copystart4}
\State $f^{a+1}_e(j) \leftarrow f^{a}_e(j)$
\EndFor
\label{l:copyend4}

\ForAll {$e = (v, k) \in E$}
\State $p \leftarrow $ an arbitrary $o$-$v$ path
\State $p_1, p_2$ paths chosen according to Lemma~\ref{LEM:KDISJOINT}
\State $\s \leftarrow (1 \mapsto p e p_1, i \mapsto p e p_2)$
\State $c_\s \leftarrow \query(\s)$
\State $f^{a+1}_e(i+1) \leftarrow c_\s(p e p_1) - \sum_{e' \in p}
f^{a+1}_{e'}(i+1) - \sum_{e' \in p_1} f^{a+1}_{e'}(n_{e'}(\s)).$
\label{l:manysum}
\EndFor
\end{algorithmic}
\caption{\textsc{MultiProcessK}}
\label{alg:manyk}
\end{algorithm}

Algorithm~\ref{alg:manyk} shows how we find the cost of putting $i + 1$ players
on each edge~$e$ that is incoming at~$k$. Apart from the consideration of
$k$-bridges, this algorithm uses the same technique as Algorithm~\ref{alg:k}.
Consider line~\ref{l:manysum}. Note that every vertex in $p$ is processed before
$k$ is processed, and therefore $f^{a+1}_{e'}(i+1)$ is known for every $e' \in
p$. Moreover, for every edge $e' \in p_1$, we have that $n_{e'}(\s) = i+1$ if
$e'$ is a $k$-bridge, and we have $n_{e'}(\s) = 1$ otherwise. In either case,
the $f^{a+1}_{e'}(n_{e'}(\s))$ is known for every edge $e' \in p_1$. The
following lemma show that line~\ref{l:manysum} correctly computes
$f^{a+1}_e(i+1)$.

\begin{lemma}
\label{LEM:ALGMANYK}
Let $k$ be a vertex, and let $f^a$ be a partial equivalent cost function, such
that $f^{a}_e(j)$ is defined for all $e \in E$ when $j \le i$, all $e =
(v, u)$ with $u \prec k$ when $j = i+1$, and all $k$-bridges when $j = i+1$.
Algorithm~\ref{alg:manyk} produces a partial equivalent cost function $f^{a+1}$,
such that $f^{a+1}_e(j)$ is defined for all $e \in E$ when $j \le i$, and for
all $e = (v, u)$ with $u \preceq k$ when $j = i+1$.
\end{lemma}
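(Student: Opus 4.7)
The plan is to prove this by induction on the iterations of the outer \textsc{ForAll} loop in Algorithm~\ref{alg:manyk}, showing that the invariant ``$f^{a+1}$ is a partial equivalent of $f$'' is preserved, and that the set of entries it defines grows as claimed. Concretely, I would fix once and for all a single total extension $f^*$ of the input $f^a$ that is equivalent to the true cost function $f$ (guaranteed by the hypothesis that $f^a$ is a partial equivalent), and then argue that after each update the same $f^*$ still serves as a total equivalent extension of the updated $f^{a+1}$.

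Lines~\ref{l:copystart4}--\ref{l:copyend4} clearly preserve equivalence, since $f^{a+1}$ is initialised to equal $f^a$ on every defined entry. For an iteration processing an incoming edge $e = (v, k)$, I would first analyse the loads $n_{e'}(\s)$ under the query profile $\s = (1 \mapsto p e p_1,\ i \mapsto p e p_2)$. Using the DAG topological order, no edge can simultaneously lie in $p$ (whose vertices all precede $v \prec k$) and in $p_1$ or $p_2$ (whose vertices are reachable from $k$); combining this with Lemma~\ref{LEM:KDISJOINT}, which forces $p_1 \cap p_2$ to consist only of $k$-bridges, I would conclude that $n_{e'}(\s) = i+1$ for $e' \in p \cup \{e\}$ and for every $k$-bridge in $p_1$, while $n_{e'}(\s) = 1$ for every non-bridge edge of $p_1$.

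With these loads in hand, I would use equivalence of $f^*$ with $f$ to rewrite the query response as
\[
c_\s(p e p_1) \;=\; \sum_{e' \in p} f^*_{e'}(n_{e'}(\s)) \;+\; f^*_e(i+1) \;+\; \sum_{e' \in p_1} f^*_{e'}(n_{e'}(\s)),
\]
and then observe, using the precondition, that $f^*$ agrees with $f^{a+1}$ on every term of the two sums: edges of $p$ have head $u' \preceq v \prec k$, so $f^{a+1}_{e'}(i+1)$ is already defined; $k$-bridges in $p_1$ are covered by the pre-pass of Algorithm~\ref{alg:bridges}; and non-bridge edges of $p_1$ require only the base value $f^{a+1}_{e'}(1)$, defined from the one-player phase. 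Rearranging the above identity yields exactly the formula on line~\ref{l:manysum}, whence $f^{a+1}_e(i+1) = f^*_e(i+1)$. Thus $f^*$ still extends the updated $f^{a+1}$, the invariant is preserved, and iterating over all incoming edges of $k$ gives the claimed output.

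The main obstacle is the bookkeeping needed to certify that every subtraction on line~\ref{l:manysum} refers to an entry of $f^{a+1}$ that is both already defined and equal to the corresponding value of $f^*$. This rests on the DAG disjointness argument for $p$ versus $p_1, p_2$, on Lemma~\ref{LEM:KDISJOINT} for the fine structure of $p_1 \cap p_2$, and on the fact that the $k$-bridge pre-pass has already populated the shared segments at load $i+1$. Once those pieces are assembled, the algebraic identity determining $f^{a+1}_e(i+1)$ is immediate.
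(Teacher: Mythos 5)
Your proof is correct and follows essentially the same strategy as the paper's: fix one total extension of $f^a$ that is equivalent to $f$ (the paper calls it $f'$, you call it $f^*$), show it also extends the newly defined entries by expanding the query response via equivalence and cancelling the sums over $p$ and $p_1$, and conclude $f^{a+1}_e(i+1)=f^*_e(i+1)$. If anything, your load analysis is stated a bit more carefully than the paper's (which contains minor slips, e.g.\ writing $p_3$ for $p_2$ in the profile and conflating the loads on $p$ versus $p_1$), but the underlying argument is identical.
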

\begin{proof}
It can be verified that the algorithm constructs a partial cost function
$f^{a+1}$ that is defined for the correct parameters. We must show that
$f^{a+1}$ is partially equivalent to~$f$. Note that $f^{a+1}$ is an extension of
$f^a$. Since $f^{a}$ is partially equivalent to $f$, there exists some total
cost function $f'$ that is an extension of $f^{a}$, such that $f'$ is equivalent
to $f$. We will show that $f'$ is also an extension of $f^{a+1}$.

Let $e = (v, k)$ be an incoming edge at $k$. We will show that $f^{a+1}_e(i+1) =
f'_e(i+1)$. Let $\s$ be the strategy that the algorithm queries while processing
$e$. Since $f'$ is equivalent to $f$, we have that $c_\s(p e p_1) = \sum_{e' \in
p e p_1} f'_{e'}(n_{e'}(\s))$. For every edge $e' \in p_1$, we have $n_{e'}(\s) = i+1$.
Since every vertex $w$ visited by $p$ satisfies $w \prec k$, for every $e' \in
p_1$ we must have $f^{a+1}_{e'}(n_{e'}(\s)) = f^{a}_{e'}(n_{e'}(\s)) =
f'_{e'}(n_{e'}(\s))$. For
every edge $e' \in p_1$, we have $n_{e'}(\s) = 1$ if $e'$ is not a $k$-bridge, and we
have $n_{e'}(\s) = i+1$ if $e'$ is a $k$-bridge. In either case, we have that
$f^{a+1}_{e'}(n_{e'}(\s)) = f^{a}_{e'}(n_{e'}(\s)) = f'_{e'}(n_{e'}(\s))$ for
every edge $e' \in p_1$. Therefore, line~\ref{l:manysum} sets:
\begin{align*}
f^{a+1}_e(i+1) &= c_\s(p e p_1) - \sum_{e' \in p} f^{a+1}_{e'}(n_{e'}(\s)) -
\sum_{e' \in p_1} f^{a+1}_{e'}(n_{e'}(\s)) \\
&= \sum_{e \in p e p_1} f'_{e'}(n_{e'}(\s)) - \sum_{e' \in p} f'_{e'}(n_{e'}(\s)) - \sum_{e'
\in p_1} f'_{e'}(n_{e'}(\s)) \\
&= f'_e(n_e(\s)) = f'_e(i+1).
\end{align*}
Therefore, for each incoming edge $e = (v, k)$, we have that $f^{a+1}_e(i+1) =
f'_e(i+1)$. Hence, $f'$ is an extension of $f^{a+1}$, which implies that
$f^{a+1}$ is partially equivalent to $f$.
\end{proof}

\paragraph{Query complexity.}

We argue that the algorithm can be implemented so that the costs for $(i+1)$
players can be discovered using at most $|E|$ many payoff queries. 
Every time Algorithm~\ref{alg:bridges} discovers the cost of
placing $i+1$ players on a $k$-bridge, it makes exactly one payoff query. 
Every time Algorithm~\ref{alg:manyk} discovers the cost
of an incoming edge $(v, k)$, it makes exactly one payoff query. 
The key observation is that the costs discovered by Algorithm~\ref{alg:bridges}
do not need to be rediscovered by Algorithm~\ref{alg:manyk}. That is, we can
modify Algorithm~\ref{alg:manyk} so that it ignores every incoming edge $(v, k)$
that has already been processed by Algorithm~\ref{alg:bridges}. This
modification ensures that the algorithm uses precisely $|E|$ payoff queries to
discover the edge costs for $i + 1$ players. This gives us the following
theorem.

\begin{theorem}\label{thm:UBnetwork}
Let $\Gamma$ be a symmetric network congestion game with $n$-players played on a
DAG with $|E|$ edges. The payoff query complexity of finding a Nash equilibrium
in $\Gamma$ is at most $n \cdot |E|$.
\end{theorem}

\section{Conclusions and further work}
\label{sec:fw}



We first consider open questions in the setting of payoff queries, which has
been the main setting for the results in this paper. We then consider alternative
query models.

\paragraph{Open questions concerning payoff queries.}


In the context of strategic-form games, there are a number of open problems.
In Theorem~\ref{thm:lowerlog}, we show a super-linear lower bound on the payoff
query complexity when $\epsilon$ is allowed to depend on $k$.
Can we prove a super-linear lower bound for a constant $\epsilon$?
Is there a deterministic algorithm that can find an $\epsilon$-Nash equilibrium
with $\epsilon<\frac{1}{2}$ without querying the entire payoff matrices?
\citet{FS13} achieve $\epsilon<\frac{1}{2}$ with the use of randomization, but
doing so with a deterministic algorithm appears to be challenging.
Finally, when $2 \le i \le k-1$, we have
shown that the payoff query complexity of finding a $(1 - \frac{1}{i})$-Nash
equilibrium lies somewhere in the range $[k-i+1, 2k - i +1]$. Determining the
precise payoff query complexity for this case is an open problem.

For congestion games, our lower bound of $\log n + m$ arises
from a game with two parallel links and a one-player game with
$m$ links.
The upper bound of 
$\bigO\left(\log(n)\cdot \frac{\log^2(m)}{\log\log(m)}+m\right)$
is a poly-logarithmic factor off from this lower bound,
with the factor depending on $m$.
Can this factor be improved?
It seems unlikely that the dependence of this factor on $m$ can be completely
removed, in which case, in order to provide tight bounds, a single lower bound
construction that depends simultaneously on $n$ and $m$ would be necessary.
 
For symmetric network congestion games on DAGs it is unclear whether the payoff
query complexity is sub-linear in $n$. Non-trivial lower and upper bounds for
more general settings, such as asymmetric network congestion games (DAG or not)
or general (non-network) congestion games would also be interesting.




\paragraph{Other query models.} 

We have defined a payoff query as given by a {\em pure} (not mixed) profile
$\s$, since that is of main relevance to empirical game-theoretic modelling.
Furthermore, if $\s$ was a mixed profile, it could be simulated by sampling a
number of pure profiles from $\s$ and making the corresponding sequence of pure
payoff queries. An alternative definition might require a payoff query to just
report a single specified player's payoff, but that would change the query
complexity by a factor at most $\numplayers$.


Our main results have related to exact payoff queries, though other query
models are interesting too.
%
%
%
%
%
%
%
A very natural type of query is a {\em best-response query}, where a strategy
$\s$ is chosen, and the algorithm is told the players' best responses to $\s$. 
In general $\s$ may have to be a mixed strategy; it is not hard to check that
pure-strategy best response queries are insufficient; even for a two-player
two-action game, knowledge of the best responses to pure profiles is not
sufficient to identify an $\epsilon$-Nash equilibrium for
$\epsilon<\frac{1}{2}$.
Fictitious Play (\citet{FL}, Chapter 2) can be regarded as a query protocol that
uses best-response queries (to mixed strategies) to find a Nash equilibrium in
zero-sum games, and essentially a 1/2-Nash equilibrium in general-sum
games~\citep{GSSV13}.
%
We can always synthesize a pure best-response query with $n(k-1)$ payoff queries.
Hence, for questions of polynomial query complexity, payoff queries are at least as
powerful as best-response queries.
%
%
Are there games where best-response queries are much more useful than payoff queries?
%
If $\numstrats$ is
large then it is expensive to synthesize best-response queries with payoff queries.
The DMP-algorithm~\citep{DMP} finds a $\frac{1}{2}$-Nash equilibrium via only
two best-response queries, whereas Theorem~\ref{thm:2playerpayoffs} notes that
$\bigO(\numstrats)$ payoff queries are needed.



A {\em noisy} payoff query outputs an observation of a random variable taking values
in $[0,1]$ whose expected value is the true payoff. Alternative versions might assume
that the observed payoff is within some distance $\epsilon$ from the true payoff.
Noisy query models might be more realistic, and they are suggested by
by the experimental papers on querying games.
However in a theoretical context, one could obtain good approximations of
the expected payoffs for a profile $\s$, by repeated sampling.
%
It would interesting to understand the power of different query models. 

\begin{acks}
We would like to thank Michael Wellman for interesting discussions on this
topic. This work was supported by ESRC grant ESRC/BSB/09, and EPSRC grants
EP/K01000X/1, EP/J019399/1, EP/H046623/1, and EP/L011018/1.
\end{acks}

\bibliography{references}

\end{document}